\newcommand{\gamd}{\gamma'}
\newcommand{\cL}{\mathcal L}
\newcommand{\inpr}[2]{\left<#1,#2\right>}
\newcommand{\shft}[1]{\varepsilon_{#1}}
\newcommand{\bshft}[1]{\bar \varepsilon_{#1}}
\newcommand{\bmu}{\bar\mu}
\newcommand{\bdelta}{\bar\delta}
\newcommand{\Kappa}{{\mathcal K}}
\newcommand{\cE}{{\mathbb E}}
\newcommand{\cM}{{\mathcal M}}
\newcommand{\diag}{\operatorname{diag}}
\newcounter{rgb}
\title{Choose interelement coupling to preserve self-adjoint dynamics in multiscale modelling and computation}
\author{A.~J. Roberts\thanks{School of Mathematical Sciences, University of Adelaide, South Australia~5005.  
\protect\url{mailto:anthony.roberts@adelaide.edu.au}}}
\begin{document}
\maketitle

\begin{abstract}
Consider the macroscale modelling of microscale spatiotemporal dynamics.
Here we develop a new approach to ensure coarse scale discrete models preserve important self-adjoint properties of the fine scale dynamics.  The first part explores the discretisation of microscale continuum dynamics.   The second addresses how dynamics on a fine lattice are mapped to lattice a factor of two coarser (as in multigrids).  Such mapping of discrete lattice dynamics may be iterated to empower us in future research to explore scale dependent emergent phenomena.  The support of dynamical systems, centre manifold, theory ensures that the coarse scale modelling applies with a finite spectral gap, in a finite domain, and for all time.  The accuracy of the models is limited by the asymptotic resolution of subgrid coarse scale processes, and is controlled by the level of truncation.  As given examples demonstrate, the novel feature of the approach developed here is that it ensures the preservation of important conservation properties of the microscale dynamics. 
\end{abstract}

\tableofcontents

\section{Introduction}

Dynamical systems theory gives \emph{new} assurances about the quality of finite difference and finite element models of nonlinear spatiotemporal systems.  Efforts to construct approximations to the  long-term, low-dimensional dynamics of dissipative partial differential equations (\pde{}s), on its inertial manifold~\cite[e.g.]{Temam90}, have largely been based upon the \emph{global} nonlinear Galerkin  method~\cite[e.g.]{Roberts89, Foias88b, Marion89}, and its variants~\cite[e.g.]{Jolly90, Foias91b}. In contrast, a `holistic discretisation'~\cite{Roberts98a} developed further here is based purely upon the \emph{local} dynamics on finite elements while maintaining, as do inertial manifolds, fidelity with the solutions of the original \pde.

To generate a macroscale model of a dissipative \pde, we divide space into finite elements, then specially crafted coupling conditions empower us to support a macroscale model with centre manifold theory~\cite[e.g.]{Carr81, Kuznetsov95}.  Such macroscale models are used for computations and for analytic understanding of the dynamics.  Crucially, the theory supports the existence, exponentially quick attractiveness, and approximate construction of a slow manifold of the system dynamics, both deterministic~\cite[e.g.]{Roberts98a} and stochastic~\cite{Roberts08a}.  Earlier research developed coupling conditions that \emph{both} had centre manifold support on finite elements \emph{and} ensured consistency as the element size became small~\cite[e.g.]{Roberts00a, MacKenzie05a}.  However, albeit effective for many systems, these coupling conditions fail to preserve \emph{at each level of approximation} any self-adjoint symmetry in the underlying spatial dynamics~\cite{Roberts01d}.  The innovations introduced in Section~\ref{sec:sacc} are interelement coupling conditions that not only engender centre manifold support, Section~\ref{sec:cmtsd}, and assure consistency for vanishing element size, but also preserve self-adjoint symmetry \emph{in each approximation}.  For just one example, Section~\ref{sec:nd}, as deduced in the model~\eqref{eq:nded}, argues that a particular one-dimensional, nonlinear, continuum diffusion is soundly mapped as follows to nonlinear dynamics on a macroscale grid (spacing~$h$):
\begin{equation}
\D tu=\D x{}\left[u\D xu\right]
\quad\mapsto\quad
\frac{dU_j}{dt}\approx\frac1{2h^2}\left( U_{j+1}^2-2U_j^2+U_{j+1}^2 \right)\,,
\label{eq:introdiff}
\end{equation}
where the grid values $U_j=u(X_j,t)$. Why this discretisation instead of others?  Because the new coupling conditions as well as having dynamical systems support and classic consistency, also preserve in the model the self-adjoint symmetry of material conservation that is present in the original \pde.

In many science and engineering applications it is essential to preserve the conservative form of the governing equations. I argue that interelement coupling rules akin to those of Section~\ref{sec:sacc} will \emph{automatically} preserve conservative forms.

Most methods for modelling dynamics posit just two time scales: a fast and a slow scale~\cite[e.g.]{Dolbow04, E04, Pavliotis06a}.  Indeed, Sections~\ref{sec:sacc}--\ref{sec:add} implicitly separate the dynamics of dissipative \pde{}s into the `uninteresting' fast subgrid dynamics, and the relevant slow dynamics of macroscale evolution resolved by the discretisation. But many applications possess a wide variety of interesting space-time scales~\cite[e.g.]{Brandt01, Dolbow04}.  Recent research developed a methodology with rigorous support for changing the resolved spatial grid scale by just a factor of two~\cite{Roberts08c}.  Homogenisation, in Section~\ref{sec:h}, is one example: the derivation of equation~\eqref{eq:cgddeh} recommends that the evolution of discrete diffusion on a grid, with spatially varying diffusivity, is mapped to a coarser grid as
\begin{eqnarray}&&
\frac{du_i}{dt}
=\kappa_{i-1/2}u_{i-1}-(\kappa_{i-1/2}+\kappa_{i+1/2})u_i 
+\kappa_{i+1/2}u_{i+1}
\nonumber\\&\mapsto&
\frac{dU_j}{dt}\approx 
\rat1{16}\big\{ \Kappa_{j-1}U_{j-2}-(\Kappa_{j-1}+\Kappa_{j+1})U_j +\Kappa_{j+1}U_{j+2} \big\},
\label{eq:introhomo}
\end{eqnarray}
where the coarse grid index $j=2i$ and the coarser scale diffusivity 
$\Kappa_j \approx\rat14(\kappa_{2j-2} +\kappa_{2j-1} +\kappa_{2j+1} +\kappa_{2j+2})$ is \emph{an} average over the microgrid diffusivities. The mapping of dynamics from a finer grid to a coarser grid, via finite elements formed from a small number of fine grid points, may then be iterated to generate a hierarchy of models across a wide range of spatial scales, with the theory of centre manifolds to support across the whole hierarchy.  This approach promises to empower us with great flexibility in modelling complex dynamics over multiple scales.  Sections~\ref{sec:dednc}--\ref{sec:tfair} further develop this modelling transformation of discrete dynamics on grids by exploring coupling conditions which result in the coarse grid dynamics also preserving the self-adjoint symmetries of the fine grid dynamics.

Most two scale modelling methods can also be applied over many scales.  However, most established methods require each application to be based upon a large `spectral gap': a parameter such as~$\epsilon$ measures the scale separation, and invoking ``as $\epsilon\to0$'' provides the extreme scale separation. In contrast, multigrid iteration for solving linear equations transforms between length scales that are different by (usually) a factor of two~\cite[e.g.]{Briggs01, Roberts99d, Brandt06}.  Analogously, Sections~\ref{sec:dednc}--\ref{sec:tfair} explore modelling \emph{dynamics} on a hierarchy of length scales that differ by a factor of two and hence the `spectral gap' is finite, not infinite as required by popular extant, non-multigrid, methods for modelling dynamics.  Section~\ref{sec:oepsad} describes how to divide fine grid lattice dynamics into small finite elements, develops interelement coupling rules that preserve self-adjoint symmetries, and then establishes the centre manifold support for the resulting coarse grid models.  Section~\ref{sec:tfair} outlines three applications including the homogenisation~\eqref{eq:introhomo}.

The methodology proposed here uses dynamical systems theory to support and construct accurate coarse grid models of fine scale dynamics, both continuum and discrete.  I expect that the basis developed here for deterministic dynamics in one spatial dimension can be extended to both higher dimensions and stochastic dynamics.  For non-dissipative dynamics, although the formal construction of coarse models follows analogously, current theory gives little support for the relevance of the resulting \emph{sub-centre} slow manifold models~\cite[e.g.]{Sijbrand85, Bokhove96}. This article is confined to one dimensional dissipative dynamics.

I also conjecture an implication for the equation-free modelling methodology of Kevrekidis et al.~\cite[e.g.]{Kevrekidis03b,Li03,Gear03,Samaey03b}.  When coupling sparse patches of microsimulators, previous research established that an analogous dynamical systems approach provided coupling of patches with high order accuracy on the macroscale~\cite{Roberts04d,Roberts06d}.  For dynamical systems where we want to automatically  preserve in the macroscale any microscale symmetries, the coupling condition~\eqref{eq:uxcc}, shown to be required here, suggests that each patch should have a point source at mid-patch in proportion to fluxes extrapolated from neighbouring patches.  Further research will tell.

\section{Self-adjoint preserving coupling conditions}
\label{sec:sacc}

\begin{figure}
\centering\setlength{\unitlength}{0.95ex}
\begin{picture}(77,25)
\put(1,1){
\put(-1,2.5){\vector(1,0){75}\ $x$}
\multiput(4,2.5)(16,0){5}{\put(0,-0.5){\line(0,1){1}}}
\put(8,3.5){$u(x,t)$}
\put(3,0){\put(0,0){$X_{j-2}$}
  \put(16,0){$X_{j-1}$}
  \put(32,0){$X_{j}$}
  \put(48,0){$X_{j+1}$}
  \put(64,0){$X_{j+2}$}
}
\put(-1,20.5){\vector(1,0){76}}
\multiput(4,20.5)(16,0){5}{\circle*1}
\put(3,22){\put(0,0){\color{red}$U_{j-2}$}
  \put(16,0){\color{blue}$U_{j-1}$}
  \put(32,0){$U_{j}$}
  \put(48,0){\color{magenta}$U_{j+1}$}
  \put(64,0){\color{green}$U_{j+2}$}
}
\thicklines
\setcounter{rgb}{0}
\multiput(0,16)(16,-5){3}{%
  \ifcase\arabic{rgb}\color{blue}\or\or\color{magenta}\fi%
  \stepcounter{rgb}%
  \put(4,0){\line(1,0){32}}
  \put(4,0){\circle{1}}
  \put(20,0){\circle{1}}
  \put(36,0){\circle{1}}
}
\put(23,17){\color{blue}$u_{j-1}(x,t)$}
\put(39,12){$u_{j}(x,t)$}
\put(55,7){\color{magenta}$u_{j+1}(x,t)$}
}
\end{picture}
\caption{to discretise continuum dynamics, bottom, to the grid, top, rewrite the continuum dynamics of~$u(x,t)$ as the dynamics of~$u_j(x,t)$ on overlapping elements. This figure plots three consecutive elements (blue, black and magenta): the $j$th~element stretches from $X_{j-1}$~to~$X_{j+1}$.  Then analysis supports and generates rules for the evolution of grid values~$U_j(t)=u_j(X_j,t)$.}
\label{fig:sacc}
\end{figure}
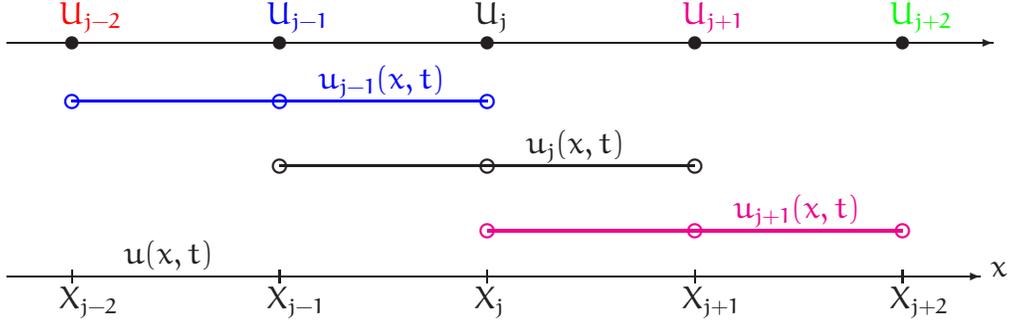

The next three sections explore the dynamics of a continuum field~$u(x,t)$ in one spatial dimension.
Figure~\ref{fig:sacc} shows a part of an $L$-periodic domain, $u(x+L,t)=u(x,t)$, which is divided into $m$~\emph{overlapping} elements $E_{j}=\{x\mid X_{j-1}\leq x\leq X_{j+1}\}$ for $m$~grid points~$X_{j}$.  These grid points need not be equally spaced.  Typically $u_{j}(x)$~and~$v_j(x)$ denote fields in the $j$th~element~$E_j$. Define the inner product
\begin{equation}
\inpr uv=\sum_{j=1}^m \int_{E_j}uv\,dx
= \sum_{j=1}^m \left\{ \int_{X_{j-1}}^{X_j^-}uv\,dx 
+\int_{X_j^+}^{X_{j+1}}uv\,dx \right\}.
\end{equation}
These integrals in the inner product are split over the two halves of each element to emphasise that perhaps unexpected contributions come from the central grid point~$X_j$.

\begin{theorem}[self-adjoint coupling] \label{thm:sac}
The operator~$\cL$ in the linear system of \emph{coupled} \pde{}s
\begin{equation}
\D t{u_j}=\cL u_j=\D x{}\left[f_j(x)\D x{u_j}\right]+\alpha g_j(x)u_j
\,,\quad x\in E_j\,,
\end{equation} 
is self-adjoint when coupled with the interelement coupling conditions
\begin{eqnarray}&&
f_j(X_j^+)u_{jx}(X_j^+)-f_j(X_j^-)u_{jx}(X_j^-)
\nonumber\\&&{}
+\gamma\big[ f_{j+1}(X_{j})u_{j+1,x}(X_j) -f_{j-1}(X_{j})u_{j-1,x}(X_j) \big]
\nonumber\\&&{}
-\gamd \big[ f_{j}(X_{j+1})u_{jx}(X_{j+1})-f_{j}(X_{j-1})u_{jx}(X_{j-1}) \big]
=0\,, \label{eq:uxcc}
\\&&
u_j(X_{j\pm1})=\gamd u_j(X_j)+\gamma u_{j\pm1}(X_{j\pm1})
\quad\text{and}
\quad u_j(X_j^-)=u_j(X_j^+)\,,
\label{eq:ucc}
\end{eqnarray}
in which subscript~$x$ denotes spatial differentiation.
\end{theorem}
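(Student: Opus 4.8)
The plan is to verify directly that $\inpr{\cL u}{v}=\inpr{u}{\cL v}$ for every pair of element fields $u=(u_j)$ and $v=(v_j)$ that satisfy the (linear, homogeneous) coupling conditions~\eqref{eq:uxcc}--\eqref{eq:ucc}, these fields being exactly the domain on which $\cL$ is to be self-adjoint. The zeroth-order part $\alpha g_j(x)u_j$ contributes $\sum_j\int_{E_j}\alpha g_j u_j v_j\,dx$ equally to $\inpr{\cL u}{v}$ and to $\inpr{u}{\cL v}$, so it cancels in the difference and can be dropped. For the divergence-form part, integrate by parts twice on each half-element $[X_{j-1},X_j^-]$ and $[X_j^+,X_{j+1}]$: the resulting interior integral $-\sum_j\int_{E_j}f_j u_{jx}v_{jx}\,dx$ is symmetric in $u\leftrightarrow v$, so $\inpr{\cL u}{v}-\inpr{u}{\cL v}$ collapses to a sum $S$ of boundary terms. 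Using the second half of~\eqref{eq:ucc}, $u_j(X_j^-)=u_j(X_j^+)$, to make the centre value single-valued, the two one-sided centre contributions of $E_j$ combine into minus the flux jump at $X_j$ times that value, so every term of $S$ has the schematic form $f_j\,(\text{a one-sided derivative of }u_j)\,(\text{a value of }v_j)$ at one of the points $X_{j-1},X_j,X_{j+1}$, antisymmetrised in $u\leftrightarrow v$.

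The next step eliminates the centre flux jumps and the outer-endpoint values from $S$. Condition~\eqref{eq:uxcc} rewrites the flux jump $f_j(X_j^+)u_{jx}(X_j^+)-f_j(X_j^-)u_{jx}(X_j^-)$ as a combination, with weights $\gamma$ and $\gamd$, of the neighbouring fluxes $f_{j\pm1}(X_j)u_{j\pm1,x}(X_j)$ and of $f_j(X_{j\pm1})u_{jx}(X_{j\pm1})$; the first half of~\eqref{eq:ucc} rewrites each outer value $u_j(X_{j\pm1})$ as $\gamd u_j(X_j)+\gamma u_{j\pm1}(X_{j\pm1})$; and both identities hold verbatim for $v$. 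After these substitutions every term of $S$ is a product $f_k(X_n)\,(\text{a derivative of }u_k\text{ or }v_k\text{ at }X_n)\,(\text{a value of }u_l\text{ or }v_l\text{ at }X_n)$ with $k,l\in\{n-1,n,n+1\}$, now anchored at a single grid point rather than straddling an element.

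The final step is to collect $S$ by grid point, reindex the element sums using the $L$-periodicity of the domain (which legitimises the cyclic shifts $j\mapsto j\pm1$), and watch the cancellations. I expect the $\gamd$-weighted terms generated by the outer-value substitution~\eqref{eq:ucc} to cancel exactly the $\gamd$-weighted terms generated by the flux-jump substitution~\eqref{eq:uxcc}, and the surviving $\gamma$-weighted terms to pair off and vanish after a single relabelling $j\mapsto j\pm1$; the $u\leftrightarrow v$-transposed half of $S$ cancels in parallel by the identical computation. Hence $S=0$ and $\cL$ is self-adjoint. The one real obstacle is keeping the bookkeeping straight --- three grid points per element, a two-sided jump at each centre, two coupling parameters, and the $\pm$ in~\eqref{eq:ucc} standing for two conditions --- so it pays to fix uniform notation for the boundary fluxes (say $\phi_j(X_k):=f_j(X_k)u_{jx}(X_k)$, with separate symbols at $X_j^\pm$) before matching terms; beyond the two integrations by parts there is no analytic content.
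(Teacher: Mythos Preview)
Your proposal is correct and follows essentially the same route as the paper: integrate by parts over each half-element, substitute the amplitude coupling~\eqref{eq:ucc} for the endpoint values, reindex the resulting cross-element terms via the $L$-periodicity (the paper's ``renumbering $j'=j\pm1$''), and then recognise that the factor multiplying each $u_j(X_j)$ and $v_j(X_j)$ is exactly the flux-jump combination~\eqref{eq:uxcc} and hence vanishes. The only cosmetic difference is that the paper integrates by parts twice to pass from $\inpr{\cL u}{v}$ directly to $\inpr{u}{\cL v}$ plus boundary terms, whereas you integrate by parts once on each side and cancel the symmetric Dirichlet integral $-\sum_j\int_{E_j}f_j u_{jx}v_{jx}\,dx$; these are the same manoeuvre, and your ``twice'' is a slip of the pen for ``once on each half''.
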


Usually I link the coupling parameters by $\gamma+\gamd =1$\,; however, this constraint is not necessary in this theorem.

In many applications the coefficient functions $f_j$~and~$g_j$ do not depend upon the element~$j$.  However, in nonlinear systems we may need the reassurance of the theorem when applied with nonlinear~$f(u,u_x)$ whence $f_j=f(u_j,u_{jx})$ will be element dependent.  Any dependence upon time and other parameters are suppressed for clarity.

\begin{proof}
Undergraduate algebra proves the theorem.
Integration by parts gives
\begin{eqnarray*}
\inpr{\cL u}v
&=& \sum_j \left\{ \big[ f_ju_{jx}v_j-f_ju_jv_{jx} \big]_{X_{j-1}}^{X_j^-} 
+\big[ f_ju_{jx}v_j-f_ju_jv_{jx} \big]_{X_{j}^+}^{X_{j+1}} 
\vphantom{\int_{E_j}}\right.\\&&\quad\left.{}
+\int_{E_j}u_j\cL v_j\,dx \right\}
\\&&\text{(which upon using~\eqref{eq:ucc} becomes)}
\\&=&\inpr u{\cL v}+\sum_j\left\{ 
f_j(X_j^-)u_{jx}(X_j^-)v_j(X_j)-f_j(X_j^-)u_j(X_j)v_{jx}(X_j^-) 
\right.\\&&\left.\quad{}
-f_{j}(X_{j-1})u_{jx}(X_{j-1})\big[\gamd v_j(X_j)+\gamma v_{j-1}(X_{j-1})\big]
\right.\\&&\left.\quad{}
+f_{j}(X_{j-1})\big[\gamd u_j(X_j)+\gamma u_{j-1}(X_{j-1})\big]v_{jx}(X_{j-1})
\right.\\&&\left.\quad{}
+f_{j}(X_{j+1})u_{jx}(X_{j+1})\big[\gamd v_j(X_j)+\gamma v_{j+1}(X_{j+1})\big]
\right.\\&&\left.\quad{}
-f_{j}(X_{j+1})\big[\gamd u_j(X_j)+\gamma u_{j+1}(X_{j+1})\big]v_{jx}(X_{j+1})
\right.\\&&\left.\quad{}
-f_j(X_j^+)u_{jx}(X_j^+)v_j(X_j)+f_j(X_j^+)u_j(X_j)v_{jx}(X_j^+) 
\right\}
\\&&\text{(upon renumbering $u_{j\pm1}(X_{j\pm1})$ by $j'=j\pm 1$ becomes)}
\\&=&\inpr u{\cL v}+\sum_j\left\{
u_j(X_j)\big[ -f_j(X_j^-)v_{jx}(X_j^-) 
+\gamd f_{j}(X_{j-1})v_{jx}(X_{j-1}) 
\right.\\&&\left.\qquad{}
+\gamma f_{j+1}(X_{j})v_{j+1,x}(X_{j}) 
-\gamd f_{j}(X_{j+1})v_{jx}(X_{j+1})
\right.\\&&\left.\qquad{}
-\gamma f_{j-1}(X_j)v_{j-1,x}(X_{j}) 
+f_j(X_j^+)v_{jx}(X_j^+)
\big]
\right.\\&&\left.\quad{}
+v_j(X_j)\big[ 
f_j(X_j^-)u_{jx}(X_j^-) 
-\gamd f_{j}(X_{j-1})u_{jx}(X_{j-1}) 
\right.\\&&\left.\qquad{}
-\gamma f_{j+1}(X_{j})u_{j+1,x}(X_{j})
+\gamd f_{j}(X_{j+1})u_{jx}(X_{j+1}) 
\right.\\&&\left.\qquad{}
+\gamma f_{j-1}(X_j)u_{j-1,x}(X_j) 
-f_j(X_j^+)u_{j,x}(X_j^+)
\big]
 \right\}
 \\&=&\inpr u{\cL v} +0\quad
 \text{by \eqref{eq:uxcc}.}
\end{eqnarray*}
Hence $\cL$~with coupling conditions~\eqref{eq:uxcc}--\eqref{eq:ucc} is self-adjoint.  This proof also applies, with only minor modifications, to multi-component systems where the field~$u_j$ in each element is a vector and $f_j$~and~$g_j$ are symmetric matrices.
\end{proof}

\paragraph{Remark}
The virtue of such coupling conditions for overlapping domains is that when finding the adjoint, the algebra involves quantities that are already involved, namely the grid values.  If one tries to find the adjoint for disjoint elements, then the field values at the element boundaries also are involved, thus leading to too many free variables after the integration by parts.  The overlap of finite elements used herein are increasingly being used in multiscale modelling: examples include the `border regions' of the heterogeneous multiscale method~\cite[e.g.]{E04}, the `buffers' of the gap-tooth scheme~\cite[e.g.]{Samaey03b}, and the overlapping domain decomposition that improves convergence in waveform relaxation of parabolic \textsc{pde}s~\cite[e.g.]{Gander98}.

\section{Centre manifold theory supports discretisation}
\label{sec:cmtsd}

Based upon the equilibria and spectra about equilibria, centre manifold theory rigorously supports the existence, relevance and construction of low~dimensional models of dynamical systems~\cite[e.g.]{Carr81, Kuznetsov95}.   This section explores the theoretical support for forming discretisations of the class of self-adjoint \pde{}s in the form
\begin{equation}
\D tu=\D x{}\left[f(x,u,u_{x})\D x{u}\right] +\alpha g(x,u,u_{x})\,,
\label{eq:nde}
\end{equation} 
using the self-adjoint preserving coupling conditions~\eqref{eq:uxcc}--\eqref{eq:ucc} with $\gamd =1-\gamma$\,.

Embed the $L$-periodic dynamics of the \pde~\eqref{eq:nde} into the overlapping elements of Figure~\ref{fig:sacc} by introducing the subgrid field~$u_j(x,t)$ in each element~$E_j$.  Define these subgrid fields to satisfy the \pde~\eqref{eq:nde} in each element, namely
\begin{equation}
\D t{u_j}=\D x{}\left[f(x,u_j,u_{jx})\D x{u_j}\right] +\alpha g(x,u_j,u_{jx})\,,
\quad j=1,\ldots,m\,,
\label{eq:ndej}
\end{equation} 
and coupled by the conditions~\eqref{eq:uxcc}--\eqref{eq:ucc}. 

In the extended state space $\cE=(u_1,\ldots,u_m,\gamma,\alpha)$ there exist a subspace~$\cE_0$ of equilibria with parameters $\gamma=\alpha=0$ ($\gamd =1$) and subgrid fields constant in each element,  $u_j(x,t)=U_j$\,; that is, fields which are piecewise constant over the domain are equilibria when $\gamma=\alpha=0$\,.  For each element define $f_j(x)=f(x,U_j,0)$ then the differential equation for perturbations~$u'_j(x,t)$, linearised about the piecewise constant equilibria, is simply the diffusion equation
\begin{equation}
\D t{u'_j}=\cL u'_j =\D x{}\left[f_j(x)\D x{u'_j}\right],
\quad j=1,\ldots,m\,,
\label{eq:lde}
\end{equation}
with, as $\gamma=0$ ($\gamd =1$), the `insulating' version of the coupling conditions~\eqref{eq:uxcc}--\eqref{eq:ucc}, namely for $j=1,\ldots,m$
\begin{eqnarray}&&
f_j(X_j^+)u'_{jx}(X_j^+)-f_j(X_j^-)u'_{jx}(X_j^-)
\nonumber\\&&\quad{}
- f_{j}(X_{j+1})u'_{jx}(X_{j+1})+f_{j}(X_{j-1})u'_{jx}(X_{j-1}) 
=0\,, \label{eq:luxcc}
\\&&
u'_j(X_{j\pm1})=u'_j(X_j^-)=u'_j(X_j^+)\,.
\label{eq:lucc}
\end{eqnarray}

The centre manifold support for discrete models of the nonlinear system~\eqref{eq:nde} with coupled elements rests upon the eigenstructure of the linearised system~\eqref{eq:lde}--\eqref{eq:lucc}.  As for all self-adjoint systems, following the proof of orthogonality of eigenmodes, Section~\ref{sec:gsp}, I proceed to prove that the eigenvalues are real and they are all non-positive.  The proofs are elementary undergraduate algebra.  Then Section~\ref{sec:rsmde} uses these to prove the existence and relevance of a slow manifold, `holistic' discretisation for the nonlinear system~\eqref{eq:nde}.

\subsection{Homogeneous spectrum on an equi-spaced grid}
\label{sec:hseqg}

As one important example, this subsection considers the special case where the grid is equispaced and the `diffusivity'~$f$ only depends upon space through its dependence upon the field~$u$, that is, $f=f(u,u_x)$.  Then the linearised dynamics about the piecewise constant solutions are described by the \pde~\eqref{eq:lde} with coefficients~$f_j$ which are constant on each element.  Assume all~$f_j$ are bounded above zero.  This subsection then finds the spectrum and eigenmodes of the linearised \pde~\eqref{eq:lde}.

Seek eigenvalues~$\lambda$ in the $j$th~element such that $f_ju'_{xx}=\lambda u'$ (dropping subscript~$j$ for simplicity). As the eigenvalues must be real (see Theorem~\ref{thm:real}), set $\lambda=-f_jk^2$ for some $k\,(\geq0)$~to be determined.  For an equi-spaced grid, $\Delta X_j=h$\,, solutions must be of the form
\begin{equation}
u'=A\cos k(x-X_j)+B\sin k(x-X_j)+C\sin k|x-X_j|\,,
\label{eq:modes}
\end{equation}
upon using the continuity at $x=X_j$\,.
Then the insulating boundary conditions~\eqref{eq:lucc} of $u'(X_{j\pm1})=u'(X_j)$ require $A(\cos kh -1)+(C\pm B)\sin kh=0$\,.
The difference of these two conditions is $2B\sin kh=0$ and hence 
either $B=0$ or $kh=n\pi$\,.  Explore both in turn.
\begin{description}
\item[$kh=n\pi$] Then $\cos kh=(-1)^n$ and the two conditions reduce to simply $A\big[(-1)^n-1\big]=0$ for which two cases arise depending upon even or odd values for~$n$.
\begin{description}
\item[even $n$] Then $A$, $B$ and~$C$ are unrestrained except possibly by the derivative condition in~\eqref{eq:luxcc}.  However, all modes independently satisfy the derivative condition~\eqref{eq:luxcc}.  Hence the three orthogonal modes $u'=\cos k(x-X_j)$, $u'=\sin k(x-X_j)$ and $u'=\sin k|x-X_j|$ form a basis for the three dimensional eigenspace corresponding to eigenvalue $\lambda=-f_jk^2$\,.

\item[odd $n$] Then $A=0$\,.  Furthermore, $\sin k(x-X_j)$ satisfies the derivative condition in~\eqref{eq:luxcc}, but $\sin k|x-X_j|$ does not, $C=0$\,. Hence $u'=\sin k(x-X_j)$ is the only eigenmode.

One may like to view the modes $\sin k|x-X_j|$ for even~$n$ with wavenumber $k=n\pi/h$ as serving in place of the usual Fourier modes $\cos k(x-X_j)$ for odd~$n$.
\end{description}

\item[$kh\neq n\pi$]  In this case, necessarily $B=0$; then the condition $A(\cos kh -1)+C\sin kh=0$ and the derivative condition~\eqref{eq:luxcc} become
\begin{displaymath}
\begin{bmatrix}
\cos kh-1&\sin kh\\-\sin kh&\cos kh-1
\end{bmatrix} \begin{bmatrix}
A\\C
\end{bmatrix}
=\vec 0\,.
\end{displaymath}
The determinant 
$(\cos kh -1)^2+\sin^2kh=2(1-\cos kh)$ is non-zero except for the outlawed cases $kh=2n\pi$\,.  Hence, $A=C=0$ also, and so there are no further eigenmodes.

\end{description}

The spectrum of the linearised dynamics~\eqref{eq:lde}--\eqref{eq:lucc} (and multiplicity) on the $j$th~element is thus
\begin{equation}
\{0,-f_j\pi^2/h^2, -4f_j\pi^2/h^2 \text{ (triple)}, 
-9f_j\pi^2/h^2, -16f_j\pi^2/h^2 \text{ (triple)},\ldots\}\,.
\label{eq:simpspec}
\end{equation}
This set, with one zero eigenvalue and the rest negative, provided `diffusivity' $f_j>0$\,, are as required for the support of centre manifold theory.

\subsection{General spectral properties}
\label{sec:gsp}

This subsection proves, using elementary undergraduate algebra, that crucial properties of the spectrum~\eqref{eq:simpspec} hold for the more general system~\eqref{eq:lde}--\eqref{eq:lucc} for possibly non-uniform elements.  Alternatively, one may view the first two of these properties as a consequence of the Self-adjoint Theorem~\ref{thm:sac}, and the third as a natural consequence of the dissipation of diffusion.  A reader comfortable with such a view may proceed direct to  the next Section~\ref{sec:rsmde}.

Throughout this subsection we only address a generic $j$th~element in isolation, the interelement coupling parameter $\gamma=0$\,.  This isolation is to explore desirable properties of the dynamics in each isolated element.  The next Section~\ref{sec:rsmde} then uses centre manifold theory, based upon these results, to support the modelling of fully coupled dynamics.

\begin{theorem}[orthogonal eigenmodes]\label{thm:orthog}
Consider the operator~$\cL$ on the right-hand side of the \pde~\eqref{eq:lde} with boundary conditions~\eqref{eq:luxcc}--\eqref{eq:lucc}; eigenmodes corresponding to distinct eigenvalues are orthogonal.
\end{theorem}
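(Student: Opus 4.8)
The plan is to reduce the statement to the symmetry of~$\cL$ with respect to the inner product~$\inpr\cdot\cdot$, and then apply the classical two-line argument for eigenmodes of a symmetric operator. Concretely, I would first show that for any two fields $u=(u_1,\ldots,u_m)$ and $v=(v_1,\ldots,v_m)$ satisfying the insulating coupling conditions~\eqref{eq:luxcc}--\eqref{eq:lucc} one has $\inpr{\cL u}v=\inpr u{\cL v}$. This is exactly the content of the Self-adjoint Theorem~\ref{thm:sac} specialised to $\gamma=0$, $\gamd=1$ and $\alpha=0$: with those values the coupling conditions~\eqref{eq:uxcc}--\eqref{eq:ucc} collapse term by term to~\eqref{eq:luxcc}--\eqref{eq:lucc}, and the $\alpha g_j u_j$ contribution disappears, so the conclusion of that theorem applies verbatim. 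Alternatively it may be re-derived in a couple of lines by integrating $\inpr{\cL u}v$ by parts twice over each half-element $[X_{j-1},X_j^-]$ and $[X_j^+,X_{j+1}]$, collecting the boundary contributions at the grid points $X_{j-1}$, $X_j^\pm$ and $X_{j+1}$, using~\eqref{eq:lucc} to rewrite every endpoint field value in terms of the grid value $u_j(X_j)$, and then invoking~\eqref{eq:luxcc} to annihilate the resulting combined flux coefficient.

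Second, given two eigenmodes with $\cL\phi=\lambda\phi$ and $\cL\psi=\mu\psi$ for $\lambda\neq\mu$, both satisfying the coupling conditions, the symmetry just established gives $\lambda\inpr\phi\psi=\inpr{\cL\phi}\psi=\inpr\phi{\cL\psi}=\mu\inpr\phi\psi$, whence $(\lambda-\mu)\inpr\phi\psi=0$ and therefore $\inpr\phi\psi=0$, which is the asserted orthogonality. Note that this uses only the bilinear symmetry of~$\cL$ and does not presuppose that the eigenvalues are real, consistent with the fact that Theorem~\ref{thm:real} is proved afterwards; indeed, taking $\psi=\bar\phi$ in the same identity is the standard route to realness, but that is logically a separate step and not needed here.

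The argument has no genuine obstacle; the only point requiring care is the bookkeeping of the boundary terms in the integration by parts, that is, verifying that the insulating coupling conditions are precisely the ones that make every surface term cancel. As the Remark following Theorem~\ref{thm:sac} emphasises, the overlap of the elements is what keeps this bookkeeping closed: after integrating by parts, only the grid values $u_j(X_j)$ and the fluxes already appearing in~\eqref{eq:luxcc} enter, so no extra free boundary data is introduced and the cancellation is forced. If one prefers not to repeat that computation, simply citing Theorem~\ref{thm:sac} in the stated special case $\gamma=\alpha=0$ suffices.
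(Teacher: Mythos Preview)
Your proposal is correct and follows essentially the same route as the paper. The paper carries out the integration by parts explicitly on a single isolated element (since $\gamma=0$ decouples them) to reach $(\mu-\lambda)\inpr uv=0$, rather than citing Theorem~\ref{thm:sac}; but the paper itself remarks just before the subsection that one may equally view the result as a direct consequence of the Self-adjoint Theorem specialised to $\gamma=0$, which is exactly your first option.
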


\begin{proof}
Adapt the classical proof in many undergraduate texts.
Let $u$~and~$v$ be eigenmodes of~$\cL$ in the $j$th~element corresponding to eigenvalues $\lambda$ and~$\mu$, respectively.  For slight simplicity use $f$~to denote~$f_j$.  Since $\cL u=\lambda u$ and $\cL v=\mu v$\,,
\begin{eqnarray*}
0&=&
\inpr{v}{\cL u-\lambda u}
-\inpr{u}{\cL v-\mu v}
\\&&\text{(integrating by parts)}
\\&=& \big[fvu_x-fuv_x\big]_{X_j^+}^{X_{j+1}}
+\big[fvu_x-fuv_x\big]_{X_{j-1}}^{X_j^-}
+(\mu-\lambda)\inpr uv
\\&=& f(X_{j+1})v(X_{j+1})u_x(X_{j+1})
-f(X_{j+1})u(X_{j+1})v_x(X_{j+1})
\\&&{}
-f(X_j^+)v(X_j^+)u_x(X_j^+)
+f(X_j^+)u(X_j^+)v_x(X_j^+)
\\&&{}
+f(X_j^-)v(X_j^-)u_x(X_j^-)
-f(X_j^-)u(X_j^-)v_x(X_j^-)
\\&&{}
-f(X_{j-1})v(X_{j-1})u_x(X_{j-1})
+f(X_{j-1})u(X_{j-1})v_x(X_{j-1})
\\&&{}
+(\mu-\lambda)\inpr uv
\\&&\text{(by~\eqref{eq:lucc}, $u(X_j^\pm)$ is continuous and $u(X_{j\pm1})= u(X_j)$,}
\\&&\quad\text{and similarly for~$v$)}
\\&=& v(X_{j})\big[f(X_{j+1})u_x(X_{j+1})
-f(X_j^+)u_x(X_j^+)
\\&&\quad{}
+f(X_j^-)u_x(X_j^-)
-f(X_{j-1})u_x(X_{j-1}) \big]
\\&&{}
+u(X_{j})\big[ -f(X_{j+1})v_x(X_{j+1})
+f(X_j^+)v_x(X_j^+)
\\&&\quad{}
-f(X_j^-)v_x(X_j^-)
+f(X_{j-1})v_x(X_{j-1}) \big]
\\&&{}
+(\mu-\lambda)\inpr uv
\\&&\text{(by the boundary condition~\eqref{eq:luxcc} on gradients)}
\\&=&(\mu-\lambda)\inpr uv \,.
\end{eqnarray*}
Hence, for distinct eigenvalues, $\lambda\neq\mu$\,, the corresponding eigenmodes must be orthogonal, $\inpr uv =0$\,.
\end{proof}

\begin{theorem}[reality]
\label{thm:real}
Consider the operator~$\cL$ in \pde~\eqref{eq:lde} with boundary conditions~\eqref{eq:luxcc}--\eqref{eq:lucc}; its eigenvalues are all real.
\end{theorem}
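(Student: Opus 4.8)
The plan is to mimic the standard argument that a self-adjoint operator has real eigenvalues, but carried out in the element-wise inner product and with the insulating coupling conditions~\eqref{eq:luxcc}--\eqref{eq:lucc} doing the work that boundary conditions do in the classical Sturm--Liouville case. Since $\cL u_j=\D x{}[f_j(x)\D x{u_j}]$ has real coefficients, if $\lambda$ is an eigenvalue with eigenmode~$u$ then $\bar\lambda$ is an eigenvalue with eigenmode~$\bar u$. So I would take $v=\bar u$ and $\mu=\bar\lambda$ in exactly the computation already performed in the proof of Theorem~\ref{thm:orthog}.

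First I would note that the integration-by-parts identity
\[
0=\inpr{\bar u}{\cL u-\lambda u}-\inpr{u}{\cL\bar u-\bar\lambda\bar u}
=(\bar\lambda-\lambda)\inpr u{\bar u}
\]
holds verbatim: every step in the proof of Theorem~\ref{thm:orthog} is algebraic manipulation of boundary terms at $X_{j-1}$, $X_j^\pm$, $X_{j+1}$, and those boundary terms vanish by~\eqref{eq:lucc} and~\eqref{eq:luxcc} for \emph{any} two fields satisfying the coupling conditions, in particular for $u$ and its conjugate~$\bar u$ (the coefficients~$f_j$ being real, $\bar u$ solves the same linearised problem with eigenvalue~$\bar\lambda$). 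Second I would observe that $\inpr u{\bar u}=\sum_j\int_{E_j}|u_j|^2\,dx\ge0$, and it is strictly positive because $u$ is an eigenmode, hence not identically zero. Therefore $\bar\lambda-\lambda=0$, i.e.\ $\lambda$ is real.

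The only genuine obstacle, and it is a mild one, is justifying that the boundary-term cancellation of Theorem~\ref{thm:orthog} really does apply with $v=\bar u$: one must check that $\bar u$ satisfies~\eqref{eq:luxcc}--\eqref{eq:lucc}. This is immediate once one writes the conjugate of each coupling equation and uses that $f_j$ is real, but it should be stated explicitly rather than glossed over. A secondary point worth a sentence is the reassurance that $\inpr u{\bar u}>0$ strictly, so that one may divide by it; this follows since the inner product is a genuine (positive-definite) $L^2$ inner product on the union of the element halves and $u\not\equiv0$. No further machinery is needed: as the paper advertises, this is elementary undergraduate algebra built directly on the already-established orthogonality computation.
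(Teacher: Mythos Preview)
Your proposal is correct and follows essentially the same route as the paper: take $v=\bar u$, $\mu=\bar\lambda$, invoke the boundary-term computation from Theorem~\ref{thm:orthog} (noting that $\bar u$ satisfies the same real-coefficient coupling conditions), and conclude from $\inpr u{\bar u}>0$ that $\lambda=\bar\lambda$. The only cosmetic difference is that the paper works on a single isolated element~$E_j$ rather than summing over~$j$, consistent with the $\gamma=0$ setting of Section~\ref{sec:gsp}.
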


\begin{proof}
Again adapt the classic proof.  Let $u$~denote any eigenmode corresponding to any eigenvalue~$\lambda$ on the $j$th~element; they are potentially complex.  Let $v$~and~$\mu$ denote the complex conjugate of $u$~and~$\lambda$, respectively.  Since the \pde~\eqref{eq:lde} and boundary conditions~\eqref{eq:luxcc}--\eqref{eq:lucc} have all real coefficients, these complex conjugates must also be an eigenmode\slash eigenvalue pair.
The derivation within the proof of Theorem~\ref{thm:orthog} then establishes that
\begin{displaymath}
(\mu-\lambda)\inpr uv=0\,.
\end{displaymath}
Here $\inpr uv=\int_{E_j} |u|^2\,dx$ which is necessarily non-zero and hence $\mu=\lambda$\,.  But $\mu$~and~$\lambda$ are complex conjugates, so any eigenvalue~$\lambda$ must be real.
\end{proof}

\begin{theorem}[spectral gap] \label{thm:gap}
Consider the operator~$\cL$ in \pde~\eqref{eq:lde} with boundary conditions~\eqref{eq:luxcc}--\eqref{eq:lucc}: when the coefficient function~$f_j(x)$ is bounded above zero, $f_{j,\min}=\min_{x\in E_j}f_j(x)>0$\,, then there is a zero eigenvalue and all other eigenvalues are negative and bounded away from zero by an amount proportional to~$f_{j,\min}/(X_{j+1}-X_{j-1})^2$.
\end{theorem}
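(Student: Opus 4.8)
The plan is to derive the bound from a Rayleigh quotient, using the self-adjointness established above (specialised to $\gamma=0$) together with a Poincar\'e inequality on the element.  Write $\ell_j=X_{j+1}-X_{j-1}$ for the element width.  First, the constant field $u'\equiv1$ satisfies~\eqref{eq:lde} with $\lambda=0$ and trivially meets the insulating conditions~\eqref{eq:luxcc}--\eqref{eq:lucc}, so $0$~is an eigenvalue.  For an arbitrary eigenpair $(\lambda,u)$ of~$\cL$ on the $j$th~element I would form $\lambda\inpr uu=\inpr u{\cL u}$ and integrate by parts exactly as in the proof of Theorem~\ref{thm:orthog}, but now with $v=u$: the boundary contributions at $X_{j-1}$, $X_j^\pm$ and~$X_{j+1}$ collapse --- via continuity~\eqref{eq:lucc} --- to the grid value $u(X_j)$ times the left-hand side of~\eqref{eq:luxcc}, and hence vanish.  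This leaves
\[
\lambda\inpr uu=-\int_{E_j}f_j(x)\,u_x^2\,dx\le0,
\]
so every eigenvalue is real (cf.\ Theorem~\ref{thm:real}) and non-positive.

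Next I would show the zero eigenvalue is simple.  If $\lambda=0$, the identity above together with the hypothesis $f_j\ge f_{j,\min}>0$ forces $u_x\equiv0$ on each half of~$E_j$, so $u$~is piecewise constant; continuity at~$X_j$ from~\eqref{eq:lucc} equates the two pieces, so $u$~is a constant.  Hence $\ker\cL$ is one-dimensional, and Theorem~\ref{thm:orthog} then guarantees that every eigenmode $u$ with $\lambda\ne0$ is orthogonal to the constants, that is, has zero mean $\int_{E_j}u\,dx=0$.

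The spectral gap now follows from a Poincar\'e--Wirtinger estimate.  An eigenmode~$u$ with $\lambda\ne0$ solves a second-order \textsc{ode} with smooth, strictly positive coefficient on each half of~$E_j$ and is continuous at~$X_j$ by~\eqref{eq:lucc}, so $u\in H^1(E_j)$ with $E_j=[X_{j-1},X_{j+1}]$, and it has zero mean by the previous step.  The Poincar\'e inequality for zero-mean functions on an interval of width~$\ell_j$, namely $\int_{E_j}u^2\,dx\le(\ell_j/\pi)^2\int_{E_j}u_x^2\,dx$ --- the sharp constant being the reciprocal of the first nonzero Neumann eigenvalue of $-d^2/dx^2$, proved in a line from a Fourier cosine expansion of~$u$ --- then gives
\[
-\lambda=\frac{\int_{E_j}f_j u_x^2\,dx}{\int_{E_j}u^2\,dx}\ge f_{j,\min}\,\frac{\int_{E_j}u_x^2\,dx}{\int_{E_j}u^2\,dx}\ge\frac{f_{j,\min}\,\pi^2}{\ell_j^2}\,.
\]
Thus $0$ is an eigenvalue and every other eigenvalue satisfies $\lambda\le-f_{j,\min}\pi^2/(X_{j+1}-X_{j-1})^2<0$, the claimed gap.

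The integration by parts is routine, as it just reuses the algebra behind Theorem~\ref{thm:orthog} with $v=u$; the one step needing care is the Poincar\'e estimate, where I must be sure the eigenmode has zero mean over the \emph{whole} element (ensured by orthogonality to the constant kernel) and that the constant scales as~$\ell_j^{-2}$.  I do not expect $\pi^2$ to be the optimal constant here --- the conditions~\eqref{eq:luxcc}--\eqref{eq:lucc} cut the admissible modes down further, which is why the equi-spaced spectrum~\eqref{eq:simpspec} in fact opens its gap at $f_j\pi^2/h^2=4f_{j,\min}\pi^2/\ell_j^2$ --- but any constant proportional to $f_{j,\min}/(X_{j+1}-X_{j-1})^2$ suffices, and this is precisely the finite spectral gap needed to invoke centre manifold theory in Section~\ref{sec:rsmde}.
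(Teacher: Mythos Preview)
Your argument is correct and reaches the stated conclusion.  The first half---the integration-by-parts identity $\lambda\int_{E_j}u^2\,dx=-\int_{E_j}f_ju_x^2\,dx$, the sign of the eigenvalues, and the simplicity of the zero eigenvalue---matches the paper's proof exactly.  The route diverges in the gap estimate: the paper keeps the full set of constraints~\eqref{eq:luxcc}--\eqref{eq:lucc} in addition to zero mean, sets up the associated Euler--Lagrange problem $\nu+2\mu u-2u_{xx}=0$, and rules out the hyperbolic and polynomial cases to locate the minimiser among trigonometric modes with smallest wavenumber $k=\pi/\bar h$ where $2\bar h=X_{j+1}-X_{j-1}$.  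You instead discard the extra constraints, retain only the zero-mean condition, and invoke the standard Poincar\'e--Wirtinger inequality on~$H^1(E_j)$.  This is a genuinely simpler argument; the price is a constant $\pi^2/\ell_j^2$ that is a factor of four below what the full constraint set delivers (and which the equi-spaced spectrum~\eqref{eq:simpspec} realises), as you yourself note.  Since the theorem only claims proportionality, nothing is lost for the application to centre manifold theory in Section~\ref{sec:rsmde}, and your route avoids the case analysis entirely.
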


\begin{proof}
Let $u$~denote any eigenmode corresponding to any eigenvalue~$\lambda$ on the $j$th~element.  Thus $\lambda u=[f_ju_x]_x$\,.  Multiply this \ode\ by~$u$ and integrate:
\begin{eqnarray*}
\lambda\int_{E_j}u^2dx
&=&\int_{E_j} u[f_ju_x]_x\,dx
\\&=& \big[f_juu_x\big]_{X_j^+}^{X_{j+1}}
+\big[f_juu_x\big]_{X_{j-1}}^{X_j^-}
-\int_{E_j}f_ju_x^2\,dx
\\&=& 
f_j(X_{j+1})u(X_{j+1})u_x(X_{j+1})
-f_j(X_{j}^+)u(X_{j}^+)u_x(X_{j}^+)
\\&&{}
+f_j(X_{j}^-)u(X_{j}^-)u_x(X_{j}^-)
-f_j(X_{j-1})u(X_{j-1})u_x(X_{j-1})
\\&&{}
-\int_{E_j}f_ju_x^2\,dx
\\&&\text{(by~\eqref{eq:lucc}, $u(X_j^\pm)$ is continuous and $u(X_{j\pm1})= u(X_j)$)}
\\&=& 
u(X_{j})\big[f_j(X_{j+1})u_x(X_{j+1})
-f_j(X_{j}^+)u_x(X_{j}^+)
\\&&\quad{}
+f_j(X_{j}^-)u_x(X_{j}^-)
-f_j(X_{j-1})u_x(X_{j-1}) \big]
-\int_{E_j}f_ju_x^2\,dx
\\&&\text{(by the boundary condition~\eqref{eq:luxcc} on gradients)}
\\&=&-\int_{E_j}f_ju_x^2\,dx\,.
\end{eqnarray*}
Hence, for coefficient functions~$f_j$ bounded above zero, the right-hand side is non-positive and hence so must all the eigenvalues~$\lambda$.
The eigenvalue zero corresponds only to solutions that are constant on the $j$th~element, $u_x=0$\,.

Now prove that all other eigenvalues are bounded away from zero by using a bound from the constant coefficient case of Section~\ref{sec:hseqg}.  Constrain the magnitude of the eigenmodes by $\int_{E_j}u^2dx=1$\,.  Then use the above identity to bound the eigenvalue
\begin{equation}
\lambda=-\int_{E_j}f_ju_x^2\,dx\leq -f_{j,\min} \int_{E_j}u_x^2dx\,.
\end{equation}
Thus minimise $\int_{E_j}u_x^2dx$ subject to $\int_{E_j}u^2dx=1$ and $\int_{E_j}u\,dx=0$ (as other eigenmodes are necessarily orthogonal to the constant eigenmode) and the boundary conditions~\eqref{eq:luxcc}--\eqref{eq:lucc}.  Using Lagrange multipliers $\mu$~and~$\nu$, standard Calculus of Variations asserts this minimum occurs for functions~$u(x)$ satisfying the Euler--Lagrange equation $\nu+2\mu u-2u_{xx}=0$\,.  For simplicity, let $\xi=x-X_j$ and $X_{j\pm1}-X_j=\pm\bar h+\tilde h$\,.  Three cases arise:
\begin{eqnarray*}
\mu=+k^2>0&\Rightarrow&
u=\frac{\nu}{2\mu}+A\cosh k\xi+B\sinh k\xi +C\sinh k|\xi|\,;
\\
\mu=0&\Rightarrow&
u=\rat14\nu x^2+A+B\xi +C|\xi|\,;
\\
\mu=-k^2<0&\Rightarrow&
u=\frac{\nu}{2\mu}+A\cos k\xi+B\sin k\xi +C\sin k|\xi|\,.
\end{eqnarray*}
In the first case (hyperbolic), substituting $u$~into the boundary conditions~\eqref{eq:luxcc}--\eqref{eq:lucc} and the orthogonality condition gives four linear equations for $A$, $B$, $C$ and~$\nu$ which have nontrivial values only when the determinant
\begin{displaymath}
-\frac{8\bar h}k\sinh k\bar h\,\big[ \cosh k\bar h -\cosh k\tilde h \big]=0\,.
\end{displaymath}
This has no real solutions for~$k$.  In the second case, substituting leads to the determinant $4\bar h^2(\bar h+\tilde h)(\bar h-\tilde h)$\,, which also cannot be zero for non-degenerate grids.
In the third case (trigonometric), $A$, $B$, $C$ and~$\nu$ have nontrivial values only when the determinant
\begin{displaymath}
\frac{8\bar h}k\sin k\bar h\,\big[ \cos k\bar h -\cos k\tilde h \big]=0\,.
\end{displaymath}
This only has solutions for finite~$k$---for non-degenerate grids the smallest $k=\pi/\bar h$ ---hence $\int_{E_j}u_x^2dx$ is bounded away from zero by an amount proportional to~$\bar h^{-2}\propto (X_{j+1}-X_{j-1})^{-2}$, and thus so are the negative eigenvalues.
\end{proof}

Consequently the spectrum of the linearised dynamics~\eqref{eq:lde}--\eqref{eq:lucc} on the $j$th~element is qualitatively like that of the constant coefficient case~\eqref{eq:simpspec}.  In particular, there always exists a spectral gap between the zero and the other eigenvalues.

\subsection{A relevant slow manifold discretisation exists}
\label{sec:rsmde}

Recall we aim to rigorously support discretisation of the nonlinear dynamics of the self-adjoint nonlinear general reaction diffusion equation~\eqref{eq:nde}.  For definiteness we seek spatially periodic solutions, $u(x+L,t)=u(x,t)$ for some period~$L$, and divide the domain into $m$~overlapping elements as shown schematically in Figure~\ref{fig:sacc} (with $X_m-X_0=L$).  The subgrid fields~$u_j(x,t)$ in each element satisfy the \pde~\eqref{eq:ndej} and are coupled by the conditions~\eqref{eq:uxcc}--\eqref{eq:ucc}.  Then Theorem~\ref{thm:gap} proves that linearised about any of the piecewise constant solutions of the subspace~$\cE_0$ there is a spectral gap in the dynamics of the reaction diffusion \pde~~\eqref{eq:nde}.   Consequently centre manifold theory~\cite[e.g.]{Carr81, Kuznetsov95} asserts the following.

\begin{corollary}[slow manifold] \label{thm:cmt}
For sufficiently smooth reaction~$g$ and diffusivity~$f$ in some neighbourhood of the subset of~$\cE_0$ for which $f_{j,\min}$~are bounded above zero:
\begin{enumerate}
\item there exists a $(m+2)$~dimensional slow manifold~$\cM_0$ of the subgrid \pde~\eqref{eq:ndej} coupled by~\eqref{eq:uxcc}--\eqref{eq:ucc}, one dimension for each element, and one dimension each for parameters $\gamma$~and~$\alpha$;

\item the slow manifold~$\cM_0$ may be parametrised by any reasonable measure~$U_j$ of the field in each element, that is, the slow manifold and the evolution thereon may be written, for some $u_j$~and~$g_j$, $j=1,\ldots,m$\,,  as 
\begin{equation}
u_j=u_j(\vec U,x,\alpha,\gamma)
\qtq{such that}
\dot U_j=\frac{dU_j}{dt}=g_j(\vec U,\alpha,\gamma)\,;
\label{eq:sm}
\end{equation}

\item \label{i:rel} the dynamics on~$\cM_0$ is `asymptotically complete'~\cite{Robinson96} in that for all solutions of the subgrid \pde~\eqref{eq:ndej} coupled by~\eqref{eq:uxcc}--\eqref{eq:ucc} from initial conditions in some neighbourhood of~$\cM_0$, there exists a solution of the slow manifold model~\eqref{eq:sm} that is approached exponentially quickly in time---roughly at a rate $\min_j\{f_{j,\min}/(X_{j+1}-X_{j-1})^2\}$;

\item \label{i:acc}  the order of error of an \emph{approximation} to the slow manifold~$\cM_0$ \emph{and} its evolution,~\eqref{eq:sm}, is the same as the order of the residuals of the governing \pde~\eqref{eq:ndej} and coupling~\eqref{eq:uxcc}--\eqref{eq:ucc} when evaluated at the approximation.
\end{enumerate}  
\end{corollary}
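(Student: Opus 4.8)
The statement is essentially a packaging of the infinite-dimensional centre manifold theorem~\cite[e.g.]{Carr81,Kuznetsov95}, so the plan is to assemble the hypotheses that Theorems~\ref{thm:real}--\ref{thm:gap} have almost entirely supplied.  First I would pass to the extended dynamical system on~$\cE$ by adjoining the trivial equations $\dot\gamma=0$ and $\dot\alpha=0$, so that the coupling and forcing parameters become dynamical variables, the subgrid fields~$u_j$ still evolving by~\eqref{eq:ndej} coupled through~\eqref{eq:uxcc}--\eqref{eq:ucc}.  Fixing a base point in~$\cE_0$, a piecewise-constant field $u_j\equiv U_j$ with $\gamma=\alpha=0$, the linearisation there is the block-diagonal diffusion operator $\cL=\diag(\cL_1,\ldots,\cL_m)$ of~\eqref{eq:lde} with the insulating coupling~\eqref{eq:luxcc}--\eqref{eq:lucc} (the $\gamma=0$, $\gamd=1$ case of Theorem~\ref{thm:sac}), plus two extra null directions from $\gamma$ and~$\alpha$.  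Theorem~\ref{thm:real} gives real spectrum and Theorem~\ref{thm:gap} gives, for each~$j$, a simple zero eigenvalue (the constant mode) with the remaining spectrum negative and below $-c\,f_{j,\min}/(X_{j+1}-X_{j-1})^2$.  Hence the centre subspace has dimension exactly $m+2$ (it contains the whole $m$-dimensional set of piecewise-constant equilibria, which is why that set being non-isolated is harmless) and the complementary spectrum is bounded away from the imaginary axis with rate $\beta:=c\min_j\{f_{j,\min}/(X_{j+1}-X_{j-1})^2\}$.

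Second, I would check the remaining technical hypotheses of the \pde\ centre manifold theorem: $\cL$ is self-adjoint on the natural inner-product space of \S\ref{sec:sacc} and bounded above, hence sectorial and the generator of an analytic semigroup on a suitable Sobolev-type space of subgrid fields over the elements; the spectral projection onto the centre subspace is bounded with the complementary semigroup decaying at rate~$\beta$; and the nonlinear remainder, the $u$- and $u_x$-dependence of $f$ and $g$ together with the $\gamma$-linear terms in the coupling, is as smooth as the ``sufficiently smooth reaction~$g$ and diffusivity~$f$'' hypothesis provides.  The one fiddly point is that~\eqref{eq:uxcc}--\eqref{eq:ucc} are $\gamma$-dependent \emph{boundary} conditions, so the operator domain varies with~$\gamma$; I would remove this in the usual way, by a smooth $\gamma$-dependent change of the subgrid fields pulling all conditions back to the fixed insulating ones~\eqref{eq:luxcc}--\eqref{eq:lucc} and dumping the $\gamma$-dependence into the (smooth, higher-order) vector field.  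The theorem then yields the local, $(m+2)$-dimensional, exponentially attracting invariant manifold~$\cM_0$, tangent at the base point to the centre subspace; that is the first claim.

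Third, the parametrisation claim follows because a centre manifold may be written as a graph over \emph{any} coordinates transverse to the stable fibres: the centre subspace is spanned by the per-element constant modes together with the $\gamma,\alpha$ directions, so any amplitude functional $U_j[u_j]$ with non-degenerate derivative on the $j$th constant mode --- ``any reasonable measure of the field in each element'' --- is admissible, and the implicit function theorem rewrites~$\cM_0$ and its reduced flow in the form~\eqref{eq:sm}.  The relevance claim is the asymptotic-completeness part of the theory: every orbit starting near~$\cM_0$ shadows, with an asymptotic phase, an orbit \emph{on}~$\cM_0$, approached exponentially at the rate set by the complementary spectral bound, namely roughly $\min_j\{f_{j,\min}/(X_{j+1}-X_{j-1})^2\}$; here I would quote the version already used for holistic discretisations~\cite{Robinson96}.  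Finally, the accuracy claim is the centre manifold approximation theorem: if a candidate subgrid field and reduced flow make the residuals of~\eqref{eq:ndej} and of the coupling~\eqref{eq:uxcc}--\eqref{eq:ucc} of some order in the small quantity organising the expansion (element size, $\gamma$, $\alpha$), then substituting into the invariance equation and inverting the complementary part via the spectral gap shows the true manifold and flow differ from the candidate by the same order; again I would cite rather than reprove this.

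I expect the only real work to be the functional-analytic bookkeeping of the second step --- choosing the function spaces, confirming sectoriality and the resolvent estimates uniformly over the relevant subset of~$\cE_0$, and checking that the $\gamma$-dependent-domain issue is genuinely removable --- together with pinning down precisely what ``reasonable measure~$U_j$'' should mean (a bounded or smooth amplitude functional acting non-degenerately on the centre eigenspace).  The spectral input, which is the mathematically substantive content, is already in hand from Theorems~\ref{thm:real}--\ref{thm:gap}, so the result is, as its name says, a corollary.
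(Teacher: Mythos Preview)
Your proposal is correct and follows the same route as the paper: assemble the spectral-gap input from Theorems~\ref{thm:real}--\ref{thm:gap} and then invoke the standard centre manifold theorem in the extended state space~$\cE$. The paper in fact offers no proof beyond the sentence ``Consequently centre manifold theory~\cite[e.g.]{Carr81, Kuznetsov95} asserts the following'' preceding the corollary, so your outline --- adjoining $\dot\gamma=\dot\alpha=0$, checking sectoriality, handling the $\gamma$-dependent boundary conditions by pulling back to the insulating case, and quoting the approximation and asymptotic-completeness parts of the theory --- supplies considerably more detail than the paper itself does, while remaining faithful to its intent.
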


The evolution~\eqref{eq:sm} on the slow manifold~$\cM_0$ forms the discrete model of the dynamics of the \pde~\eqref{eq:ndej} coupled by~\eqref{eq:uxcc}--\eqref{eq:ucc}.   In principle, such a model is an \emph{exact closure} for the discretisation in that the model tracks the evolution from general initial conditions [Corollary~\ref{thm:cmt}.\ref{i:rel}].  However, we hardly ever can construct an exact slow manifold. Nonetheless, computer algebra readily constructs the slow manifold and its evolution to a controllable order of accuracy [Corollary ~\ref{thm:cmt}.\ref{i:acc}].  Then evaluating the model for full coupling, $\gamma=1$\,, generates a model for the dynamics of the physical reaction diffusion \pde~\eqref{eq:nde}.  Section~\ref{sec:db} provides evidence that $\gamma=1$ is within the finite domain of validity of the slow manifold~$\cM_0$.

\section{Application to discretising diffusion}
\label{sec:add}

This section briefly describes three interesting applications of the preceding Corollary~\ref{thm:cmt}.  Section~\ref{sec:ld} investigate the simplest case of linear diffusion in order to show how the slow manifold varies with coupling parameter~$\gamma$.  The interest is to see how the approach generates classic interpolation for the subgrid fields and classic finite difference rules for the evolution.  Section~\ref{sec:nd} explores nonlinear diffusion to demonstrate that this approach supports specific discretisations for nonlinear problems and that, through using the coupling conditions~\eqref{eq:uxcc}--\eqref{eq:ucc}, the discretisation preserves the self-adjointness of the original physical system.
Lastly, Section~\ref{sec:db} illustrates one way to account for domain boundary conditions other than periodic, and also verifies convergence in the coupling~$\gamma$ in one particular example.

\subsection{Linear diffusion}
\label{sec:ld}

The simplest application of the Slow Manifold Corollary~\ref{thm:cmt} is to forming discrete models of linear homogeneous diffusion as governed by the \pde
\begin{equation}
\D tu=\DD xu\,.
\label{eq:lindiff}
\end{equation}
As described in Section~\ref{sec:cmtsd}, embed the $L$-periodic spatial domain into dynamics on $m$~overlapping elements.  Corollary~\ref{thm:cmt} guarantees there exists a relevant discrete, slow manifold, model of the diffusion~\eqref{eq:lindiff}.

Computer algebra~\cite[\S2]{Roberts08j} readily constructs approximations to the slow manifold; one may check the approximation by confirming \eqref{eq:desg}--\eqref{eq:ded} satisfies the governing equations, \eqref{eq:lindiff} and \eqref{eq:uxcc}--\eqref{eq:ucc}, to the specified order of error.  Let the grid be uniform, $\Delta X_j=h$\,, and define the grid values $U_j(t)=u_j(X_j,t)$\,.
Computer algebra finds the subgrid, intraelement field is
\begin{equation}
u_j=\big[ 1+\gamma(\xi\mu\delta +\rat12|\xi|\delta^2)
+\gamma^2(\rat12\xi^2\delta^2 -\rat12|\xi|\delta^2) \big]U_j
+\Ord{\gamma^3}\,.
\label{eq:desg}
\end{equation}
in terms of the subgrid variable $\xi=(x-X_j)/h$\,, and the centred mean~$\mu$ and difference~$\delta$ operators, $\mu U_j=(U_{j+1/2}+U_{j-1/2})/2$ and $\delta U_j=U_{j+1/2}-U_{j-1/2}$\,.
Reassuringly, when evaluated at full coupling $\gamma=1$ the terms linear and quadratic in the coupling parameter~$\gamma$ form classic linear and quadratic interpolation, respectively, between the grid values~$U_j$.
The corresponding evolution on the slow manifold is the discretisation
\begin{eqnarray}
\dot U_j&=&\frac1{h^2}\gamma^2\delta^2U_j
-\frac{6-5\gamma}{12h^2}\gamma^3\delta^4U_j
+ \frac{45-75\gamma+32\gamma^2}{180h^2}\gamma^4\delta^6U_j
\nonumber\\&&{}
-\frac{210-525\gamma+448\gamma^2-130\gamma^3}{1680h^2}\gamma^5\delta^8U_j
+\Ord{\gamma^9,\delta^{10}}\,.
\label{eq:ded}
\end{eqnarray}
Evaluated at full physical coupling $\gamma=1$, this model recovers the classic centred finite difference formula for the discretisation.  Computing to higher orders in coupling~$\gamma$, gives more and more terms in the classic formula. 

This approach recovers classic formula in such simple linear dynamics.  However, the theoretical support is different: centre manifold theory applies at finite element size, to guarantee a relevant model for all initial conditions in some finite domain, and, after initial transients decay, for all times.  The only approximation is the error incurred by the truncation of the description of the slow manifold in coupling parameter~$\gamma$.

\subsection{Nonlinear diffusion}
\label{sec:nd}

Nonlinear diffusion has many applications and is of continuing interest~\cite[e.g.]{Kurganov00, Witelski98}
As a specific example application, consider nonlinear diffusion governed by the \pde\ 
\begin{equation}
\D tu=\D x{}\left(u\D xu\right).
\label{eq:nlde}
\end{equation}
As described in Section~\ref{sec:cmtsd}, embed the $L$-periodic domain into dynamics on $m$~overlapping elements, with the coupling conditions on the gradient implemented with nonlinear diffusivity $f(x,u,u_x)=u$\,.  Corollary~\ref{thm:cmt} guarantees there exists a relevant discrete, slow manifold, model of the nonlinear diffusion~\eqref{eq:nlde}.

The rate of attraction to the slow manifold is no longer a constant: instead it is now proportional to some measure of the smallest amplitude of the initial field.  Consequently, high order asymptotic approximations to the slow manifold are replete with divisions by grid values~$U_j$.  Such divisors reflect the nonlinear diffusion and suggest that the domain of attraction of the slow manifold model reduces for fields~$u$ of small magnitude.

Computer algebra~\cite[\S3]{Roberts08j} readily constructs the subgrid field of the slow manifold:
\begin{eqnarray}
u_j&=&\big[ 1+\gamma\xi\mu\delta +\rat12\gamma(1-\gamma)|\xi|\delta^2)
+\gamma^2\rat12\xi^2\delta^2 \big]U_j
\nonumber\\&&{}
+\gamma^2\big[ \xi(1-|\xi|)(\mu\delta U_j +\rat1{2U_j}\mu\delta U_j^2)
+(|\xi|-\xi^2)(\rat12\delta^2U_j-\rat1{4U_j}\delta^2 U_j^2) \big]
\nonumber\\&&{}
+\Ord{\gamma^3}\,.
\end{eqnarray}
The first line is identical to that for linear diffusion,~\eqref{eq:desg}; thus the second line is due to the nonlinearity in the diffusion.  Unlike usual finite differences or finite elements methodology which imposes an interpolation between grid values, part of the value of this `holistic discretisation'~\cite[e.g.]{Roberts00a} is that the subgrid field is constructed to satisfy the \pde~\eqref{eq:nlde} and thus generates accurate closures for the discrete model.  The evolution on the slow manifold gives the discrete model
\begin{equation}
\dot U_j=\frac1{2h^2}\gamma^2\delta^2U_j^2
-\frac{6-5\gamma}{24h^2}\gamma^3\delta^4U_j^2
+ \frac{45}{360h^2}\gamma^4\delta^6U_j^2
+\Ord{\gamma^5}\,.
\label{eq:nded}
\end{equation}
Truncated to errors~$\Ord{\gamma^3}$ then evaluated at full coupling, $\gamma=1$\,, this is the introductory model~\eqref{eq:introdiff}.
To errors of order~$\Ord{\gamma^5}$, the discretisation~\eqref{eq:nded} of the nonlinear diffusion is the same as the discretisation~\eqref{eq:ded} of linear diffusion, but applied to~$\rat12U_j^2$ instead of to~$U_j$.  This is reasonable since this continuum nonlinear diffusion operator~$(uu_x)_x=(\rat12u^2)_{xx}$.  This nontrivial correspondence, not imposed at all but instead a natural closure from the intricate subgrid scale interactions, confirms this approach to discretisation is sound.

However, such a very close correspondence breaks down at~$\Ord{\gamma^5}$ when divisions by~$U_{j\pm1}$ start invading the slow evolution~\eqref{eq:nded} of the nonlinear diffusion.  Subtleties in such higher order subgrid scale interactions result in such more complicated discretisations.

Nonetheless, the equivalent differential equation of~\eqref{eq:nded} is
\begin{eqnarray}
\D tU&=&\gamma^2\D x{}\left(U\D xU\right)
+h^2\gamma^2(1-\gamma)\frac{1-5\gamma}{12}\D x{}\left(3\D xU\DD xU+U\Dn x3U\right)
\nonumber\\&&{}
+\Ord{h^4,\gamma^5}\,,
\label{ndeede}
\end{eqnarray}
which is in conservative form no matter what order we truncate the analysis in coupling parameter~$\gamma$. The coupling conditions~\eqref{eq:uxcc}--\eqref{eq:ucc} do preserve conservation.

\subsection{Dirichlet boundaries on one element}
\label{sec:db}

In every other section we explore dynamics far away from physical boundaries by assuming spatial periodicity.  Conversely, this section explores the extreme case of precisely one element between physical boundaries forming that one element, say the element is non-dimenisonalised to $-1<x<1$\,.
This extreme case empowers us to compute to high order and show convergence in the `coupling' parameter~$\gamma$, as well as illustrating the ease of incorporating physical boundary conditions on the global domain rather than assuming periodic conditions for the $m$~elements as used elsewhere.

For definiteness, suppose the physical boundary conditions for the example \pde\ of nonlinear diffusion~\eqref{eq:nlde} are the Dirichlet conditions that $u=0$ at $x=\pm1$\,. Implement these Dirichlet conditions on the one element~$[-1,1]$ via the adaptation of the coupling conditions~\eqref{eq:uxcc}--\eqref{eq:ucc} to
\begin{eqnarray}&&
u(0^+,t)u_{x}(0^+,t)-u(0^-,t)u_{x}(0^-,t)
\nonumber\\&&{}
=\gamd \big[ u(1,t)u_{x}(1,t)-u(-1,t)u_{x}(-1,t) \big], 
\label{eq:uxcc1}
\\&&
u(\pm1,t)=\gamd u(0,t)
\quad\text{and}
\quad u(0^-,t)=u(0^+,t)\,.
\label{eq:ucc1}
\end{eqnarray}
When coupling parameter~$\gamma=0$ ($\gamd =1$) the element is isolated from the physical boundaries and the spectrum of the dynamics on the one element are as for the previous Section~\ref{sec:nd}.  Thus there exists a slow manifold parametrised by~$\gamma$ and the mid-element value $U_0(t)=u(0,t)$\,:  the slow manifold may be described by $u(x,t)=u_0(x,U_0,\gamma)$ such that $\dot U_0=g(U_0,\gamma)$\,.  It is exponentially quickly attractive, roughly at a rate~${}\propto U_0$\,.

Straightforward adaptions of the computer algebra~\cite{Roberts08j} construct the slow manifold and the evolution thereon to high order in the coupling parameter~$\gamma$.
The slow manifold is 
\begin{equation}
u_0=U_0\left[1 -\gamma|x| +\gamma^2(|x|-x^2) +\gamma^3\rat1{12}(|x|+9x^2-10|x|^3)
+\Ord{\gamma^4}\right].
\end{equation}
Observe the $\gamma^1$~and~$\gamma^2$ subgrid structures give classic linear and quadratic interpolation when evaluated at the physical $\gamma=1$\,.  It is the $\Ord{\gamma^3}$~terms that begins to account for the nonlinear nature of the diffusion, hence provide correct subgrid structures, and consequently provide a sound closure for the macroscale, one dimensional model on the element.
The corresponding evolution on the slow manifold is 
\begin{equation}
\dot U_0=-U_0^2\left[ \gamma^2+\rat12\gamma^3 -\rat14\gamma^5 -\rat{11}{48}\gamma^6 -\rat{7}{96}\gamma^7 +\rat{55}{864}\gamma^8 +\rat{89}{864}\gamma^9 +\Ord{\gamma^{10}} \right].
\end{equation}
This predicts that from all nearby initial conditions, and apart from exponentially quick transients, solutions decay algebraically in proportion to~$1/t$ as `material'~$u$ diffuses through the physical boundaries at $x=\pm 1$ where the diffusivity is zero but the flux~$u\D xu$ is not.

\begin{figure}
\centering
\begin{tabular}{c@{}c}
\rotatebox{90}{\hspace*{15ex}$1/r^2$}&
\includegraphics{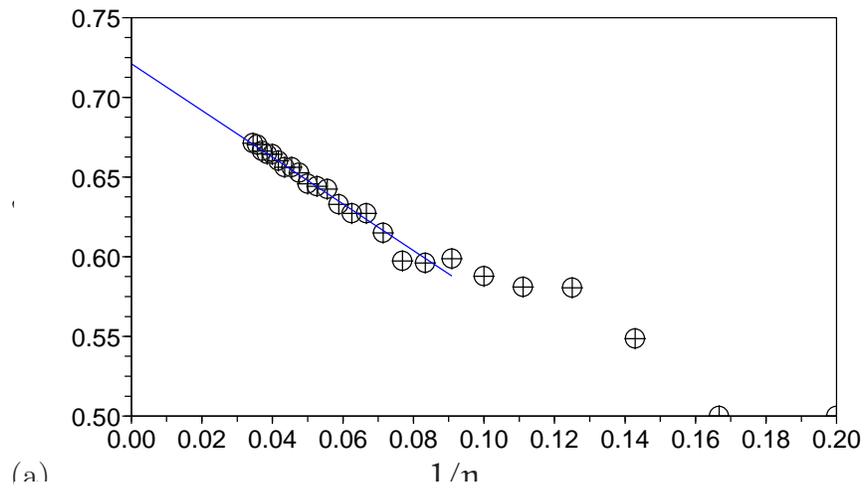}\\[-1ex]
(a)& $1/n$\\[2ex]
\rotatebox{90}{\hspace*{15ex}$\cos\text{angle}$}&
\includegraphics{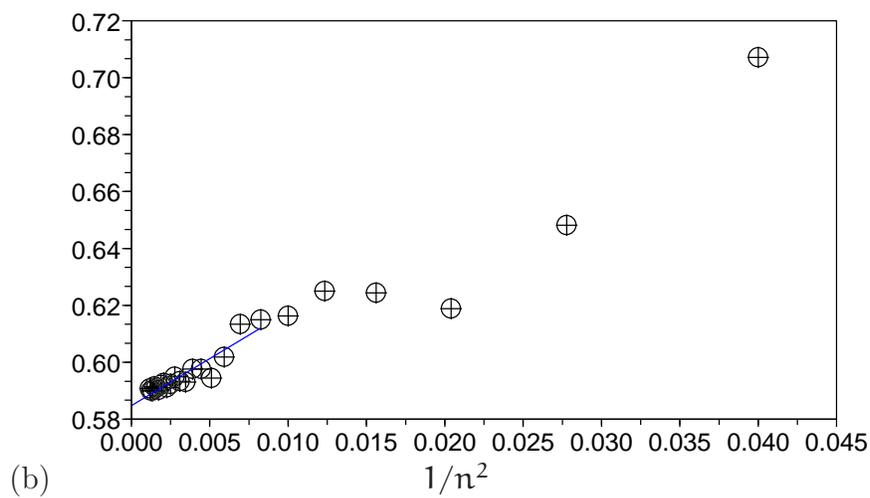}\\[-1ex]
(b)& $1/n^2$
\end{tabular}
\caption{power series to errors~$\Ord{\gamma^{30}}$ generate these generalised Domb--Sykes plots~\cite[Appendix]{Mercer90} that, when extrapolated to $1/n=0$\,, strongly suggest the power series in coupling parameter~$\gamma$ has radius of convergence~$1.18$.}
\label{fig:dsr}
\end{figure}

With just one `amplitude'~$U_0$ computer algebra generates approximations to~$\Ord{\gamma^{30}}$ within a minute.  Then generalised Domb--Sykes plots~\cite[Appendix]{Mercer90} estimate the location of the convergence limiting singularity.
Figure~\ref{fig:dsr} strongly suggests convergence in coupling parameter~$\gamma$ with radius of convergence~$1.18$ due to a complex conjugate pair of (logarithmic) singularities in complex~$\gamma$ at an angle of~$54^\circ$ to the real $\gamma$~axis.  These plots provide good evidence that evaluation at $\gamma=1$ of the power series' is convergent for at least this simple case.

\section{Disjoint lattice elements do not couple}
\label{sec:dednc}

Research into spatio-temporal dynamics commonly invokes a discrete lattice~\cite[e.g.]{Chen98, Cisternas03, Mobilia06, Simpson07, Giannoulis08}.
Section~\ref{sec:oepsad} introduces rigorous support for a transformation from fine scale lattice dynamics to a coarser scale lattice dynamics.  Similar to the continuum case of Sections~\ref{sec:sacc}--\ref{sec:add}, the transformation requires embedding the dynamics in a system of twice the dimensionality, before reducing the dimension by a factor of four, to achieve an overall halving of the dimensionality. The resultant model then resolves dynamics on a lattice with twice the grid spacing.  This section argues that it is difficult, if not impossible, to achieve such net halving of the dimensionality without the initial doubling via the embedding. 

As the most basic, but key, example of self-adjoint dynamics on a lattice, consider the discrete diffusion equation nondimensionalised to
\begin{equation}
\dot u_i=u_{i-1}-2u_i+u_{i+1}\,,
\label{eq:dde}
\end{equation}
on equi-spaced grid points $x_i=ih$\,.
This section seeks to construct a sound coarse scale model for these dynamics, but fails because of interesting reasons that inspire the next sections.

For almost extreme simplicity, suppose the discrete diffusion~\eqref{eq:dde} applies on a small finite domain at just four lattice points, $i=1,2,3,4$, with insulating Neumann-like boundary conditions provided by
\begin{equation}
u_1-u_0=u_4-u_5=0\,.
\label{eq:nlbc}
\end{equation}
We seek a model in just two dynamical variables, for this system with four dynamical variables.  Because the dynamics are so low dimensional, and linear, we reasonably explore all options.

\begin{figure}
\centering\setlength{\unitlength}{1ex}
\begin{picture}(40,10)
\put(0,7){\line(1,0){24}}
\put(0,7){\circle{1}}
\put(8,7){\circle*{1}}
\put(16,7){\circle*{1}}
\put(24,7){\circle{1}}
\put(16,3){\line(1,0){24}}
\put(16,3){\circle{1}}
\put(24,3){\circle*{1}}
\put(32,3){\circle*{1}}
\put(40,3){\circle{1}}
\put(-1,0.5){\put(0,0){$u_0$}
\put(8,0){$u_1$}
\put(16,0){$u_2$}
\put(24,0){$u_3$}
\put(32,0){$u_4$}
\put(40,0){$u_5$}
}
\put(-1,8.5){\put(0,0){$v_0$}
\put(8,0){$v_1$}
\put(16,0){$v_2$}
\put(24,0){$v_3$}
}
\put(15,4.5){\put(0,0){$v_4$}
\put(8,0){$v_5$}
\put(16,0){$v_6$}
\put(24,0){$v_7$}
}
\end{picture}
\caption{rewrite the dynamics of~$\vec u$ as the dynamics of~$\vec v$ on two elements by renaming variables: the solid discs correspond to differential equations, and the open circles correspond to algebraic coupling equations.}
\label{fig:dednc}
\end{figure}
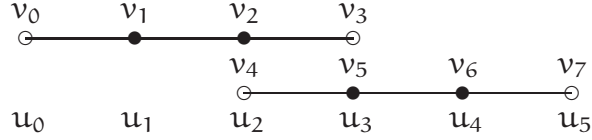

Divide the domain into two elements, the first containing $u_1$~and~$u_2$ and the second containing $u_3$~and~$u_4$. As shown in Figure~\ref{fig:dednc}, to connect the two elements introduce new names $v_3=v_5=u_3$ and $v_2=v_4=u_2$ for the middle two dynamical variables.  For convenience, rename the others variables $v_0=u_0$\,, $v_1=u_1$\,, $v_6=u_4$ and $v_7=u_5$\,.  Then write the discrete diffusion~\eqref{eq:dde} and insulating boundary conditions~\eqref{eq:nlbc}, together with the interelement coupling identities, as the differential-algebraic system
\begin{equation}
D\dot{\vec v}=L\vec v\,,
\label{eq:dde2}
\end{equation}
where $\vec v=(v_0,\ldots,v_7)$, $D=\operatorname{diag}(0,1,1,0,0,1,1,0)$ and the matrix
\begin{equation}
\newcommand{\B}{\color{blue}\it}
L=\begin{bmatrix} \B-1&1&0&\B0&\B0&0&0&\B0\\
       1&-2&1&0&0&0&0&0\\
       0&1&-2&1&0&0&0&0\\
       \B0&0&1&\B0&\B-1&0&0&\B0\\
       \B0&0&0&\B-1&\B0&1&0&\B0\\
       0&0&0&0&1&-2&1&0\\
       0&0&0&0&0&1&-2&1\\
       \B0&0&0&\B0&\B0&0&1&\B-1
\end{bmatrix},
\label{eq:ll2}
\end{equation}
where I explain the import of the italic entries shortly.
The spectrum of the linear discrete diffusion~\eqref{eq:dde2}, from $\det(L-\lambda D)$, is $\{0,-2+\sqrt2,-2,-2-\sqrt2\}$.  We construct a long term model from the two slow modes corresponding to the two eigenvalues nearest zero.  Thus we seek a model with just two dynamical variables that systematically track the amplitude of the two slowest modes.

Centre manifold theory provides rigorous support for such low dimensional modelling~\cite[e.g.]{Carr81, Kuznetsov95}.  But, analogously to the continuum analysis of Sections~\ref{sec:sacc}--\ref{sec:add}, we need to artificially modify the linear operator~$L$ to have two eigenvalues of zero, then implement a homotopy (a smooth path) to recover~$L$ and the original dynamics~\cite[e.g.]{Roberts00a, Roberts08c}.  I argue this is impossible, and hence we need the more complicated embedding of the next section.

To see what freedom we have available, first identify the aspects of~$L$ that cannot be changed.  We posit that the evolution equations, corresponding to the second, third, sixth and seventh lines of~$L$ cannot be changed as they are to encode the microscale dynamics~\eqref{eq:dde}: if we need to modify the microscale dynamics, then any embeddings we find are almost certainly problem specific and thus not of general power.   To maintain the self-adjoint symmetry of~$L$, this then fixes $L_{1,2}$, $L_{4,3}$, $L_{5,6}$ and~$L_{8,7}$ to be one, and many other entries to be zero.  The entries to vary are those in italics in~\eqref{eq:ll2}.  Second, of these, set $L_{1,8}=L_{8,1}=0$ to avoid excessively nonlocal equations.  Third, preserving the zero eigenvalue of $u_i=\text{constant}$, the conservation mode, we need each row sum to remain zero.  Lastly, as well as self-adjoint symmetry, we require isotropy, left-right symmetry. These four requirements result in three degrees of freedom spanned by the three matrices
\begin{eqnarray*}&&
Y_1=\begin{bmatrix}
0 & 0 & 0 & 0 & 0 & 0 & 0 & 0\\
0 & 0 & 0 & 0 & 0 & 0 & 0 & 0\\
0 & 0 & 0 & 0 & 0 & 0 & 0 & 0\\
0 & 0 & 0 & -1 & 1 & 0 & 0 & 0\\
0 & 0 & 0 & 1 & -1 & 0 & 0 & 0\\
0 & 0 & 0 & 0 & 0 & 0 & 0 & 0\\
0 & 0 & 0 & 0 & 0 & 0 & 0 & 0\\
0 & 0 & 0 & 0 & 0 & 0 & 0 & 0
\end{bmatrix},
\\&&
Y_2=\begin{bmatrix}
1 & 0 & 0 & -1 & 0 & 0 & 0 & 0 \\
0 & 0 & 0 & 0 & 0 & 0 & 0 & 0 \\
0 & 0 & 0 & 0 & 0 & 0 & 0 & 0 \\
-1 & 0 & 0 & 1 & 0 & 0 & 0 & 0 \\
0 & 0 & 0 & 0 & 1 & 0 & 0 & -1\\
0 & 0 & 0 & 0 & 0 & 0 & 0 & 0 \\
0 & 0 & 0 & 0 & 0 & 0 & 0 & 0 \\
0 & 0 & 0 & 0 & -1 & 0 & 0 & 1 
\end{bmatrix},
\\&&
Y_3=\begin{bmatrix}
0 & 0 & 0 & -1 & 1 & 0 & 0 & 0 \\
0 & 0 & 0 & 0 & 0 & 0 & 0 & 0 \\
0 & 0 & 0 & 0 & 0 & 0 & 0 & 0 \\
-1 & 0 & 0 & 0 & 0 & 0 & 0 & 1 \\
1 & 0 & 0 & 0 & 0 & 0 & 0 & -1\\
0 & 0 & 0 & 0 & 0 & 0 & 0 & 0 \\
0 & 0 & 0 & 0 & 0 & 0 & 0 & 0 \\
0 & 0 & 0 & 1 & -1 & 0 & 0 & 0 
\end{bmatrix}.
\end{eqnarray*}
Thus the (nearly) most general self-adjoint, isotropic, conservative, matrix of the discrete diffusion dynamics is $L+y_1Y_1+y_2Y_2+y_3Y_3$ for some constants~$y_1$, $y_2$ and~$y_3$.  Its spectrum is given by the zeros of the characteristic polynomial
\begin{equation}
\det(L+y_1Y_1+y_2Y_2+y_3Y_3-\lambda D)
=c_1\lambda+c_2\lambda^2+c_3\lambda^3+c_4\lambda^4\,,
\label{eq:chpoly2}
\end{equation}
for some coefficients~$c_k$ depending upon~$y_1$, $y_2$ and~$y_3$: for example,
\begin{displaymath}
c_4=-(1-2y_2)(1+4y_3^2+4y_2y_3-2y_1+2y_1y_2)\,.
\end{displaymath}
\begin{figure}
\centering
\includegraphics{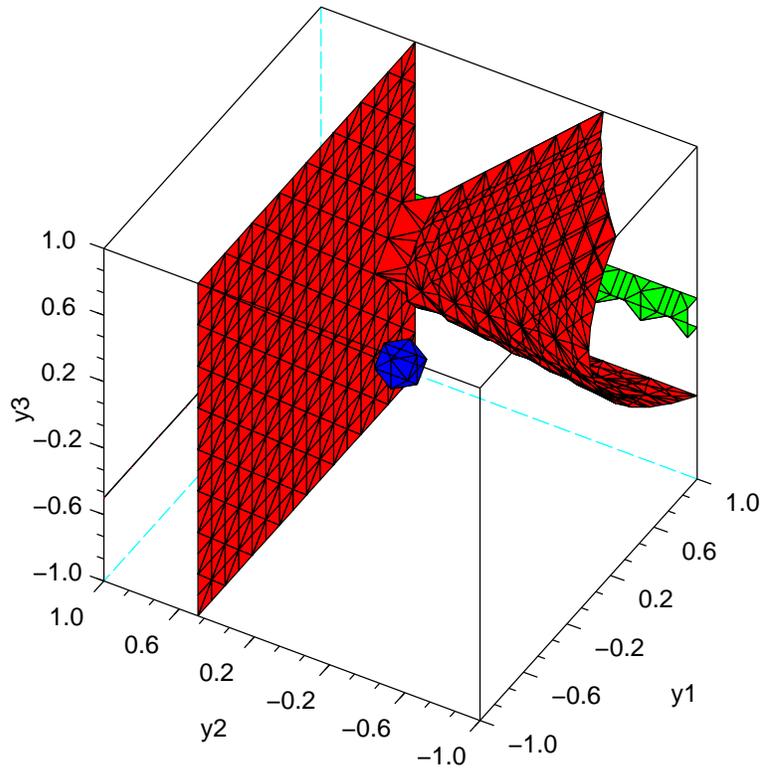}
\caption{red: iso-surface of `infinite' eigenvalues from the coefficient $c_4=0$ of the characteristic polynomial~\eqref{eq:chpoly2}.  Blue blob: the origin indicating the original diffusion dynamics~\eqref{eq:ll2}.  Green line: two decoupled elements.  There appears no route from green to blue without encountering `infinite' eigenvalues, red.}
\label{fig:dbl4eigs}
\end{figure}%
Recall that we seek to create a homotopy from dynamics with a double zero eigenvalue on two decoupled elements, $y_1=1$ and $y_3=0$ (the `green line' in Figure~\ref{fig:dbl4eigs}), to the original dynamics with $y_1=y_2=y_3=0$ (the `blue blob' in Figure~\ref{fig:dbl4eigs}).  Figure~\ref{fig:dbl4eigs} indicates that it is impossible to create a homotopy from the two decoupled elements to the original dynamics without crossing the red surfaces.  The red surfaces in Figure~\ref{fig:dbl4eigs} are the surfaces of $c_4=0$ in the charateristic polynomial~\eqref{eq:chpoly2}, and hence represent neighbourhoods where eigenvalues become infinitely large, both positive and negative.  We cannot create a smooth homotopy from two decoupled elements to the coupled original dynamics through such neighbourhoods.  This failure with just two elements suggests that, within the class of self-adjoint, isotropic, diffusion dynamics, we cannot artificially divide a domain into disjoint elements.

Thus the next section proceeds to explore embedding lattice dynamics onto overlapping elements analogous to the overlapping elements used for the continuum dynamics of Sections~\ref{sec:sacc}--\ref{sec:add} and for other multiscale approaches~\cite[e.g.]{E04, Samaey03b, Gander98}.

\section{Overlapping elements preserve self-adjoint dynamics}
\label{sec:oepsad}

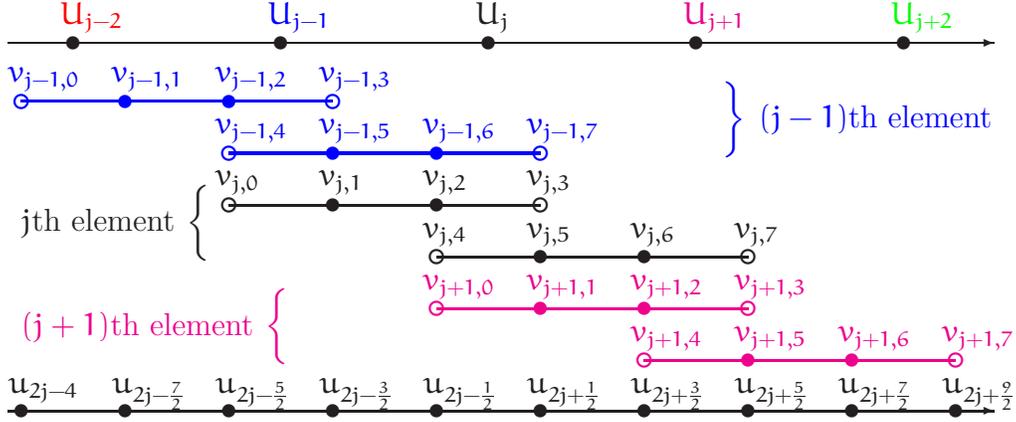
\begin{figure}
\centering\setlength{\unitlength}{0.95ex}
\begin{picture}(77,32)
\put(1,1){
\put(-1,-0.9){\vector(1,0){76}}
\multiput(0,-0.9)(8,0){10}{\circle*{1}}
\put(-1,0.5){\put(0,0){$u_{2j-4}$}
  \put(8,0){$u_{2j-\frac72}$}
  \put(16,0){$u_{2j-\frac52}$}
  \put(24,0){$u_{2j-\frac32}$}
  \put(32,0){$u_{2j-\frac12}$}
  \put(40,0){$u_{2j+\frac12}$}
  \put(48,0){$u_{2j+\frac32}$}
  \put(56,0){$u_{2j+\frac52}$}
  \put(64,0){$u_{2j+\frac72}$}
  \put(72,0){$u_{2j+\frac92}$}
}
\put(-1,27.5){\vector(1,0){76}}
\multiput(4,27.5)(16,0){5}{\circle*{1}}
\put(3,29){\put(0,0){\color{red}$U_{j-2}$}
  \put(16,0){\color{blue}$U_{j-1}$}
  \put(32,0){$U_{j}$}
  \put(48,0){\color{magenta}$U_{j+1}$}
  \put(64,0){\color{green}$U_{j+2}$}
}
\thicklines
\setcounter{rgb}{0}
\multiput(0,16)(16,-8){3}{%
  \ifcase\arabic{rgb}\color{blue}\or\or\color{magenta}\fi%
  \stepcounter{rgb}%
  \put(0,7){\line(1,0){24}}
  \put(0,7){\circle{1}}
  \put(8,7){\circle*{1}}
  \put(16,7){\circle*{1}}
  \put(24,7){\circle{1}}
  \put(16,3){\line(1,0){24}}
  \put(16,3){\circle{1}}
  \put(24,3){\circle*{1}}
  \put(32,3){\circle*{1}}
  \put(40,3){\circle{1}}
}
\put(0,13){$j$th~element $\left\{\vphantom{\displaystyle\int}\right.$}
\put(0,5){\color{magenta}
  $(j+1)$th~element $\left\{\vphantom{\displaystyle\int}\right.$}
\put(54,21){\color{blue} $\left\}\vphantom{\displaystyle\int}\right.$
  $(j-1)$th~element}
\put(16,8){
  \put(-1,8.5){\put(0,0){$v_{j,0}$}
    \put(8,0){$v_{j,1}$}
    \put(16,0){$v_{j,2}$}
    \put(24,0){$v_{j,3}$}
  }
  \put(15,4.5){\put(0,0){$v_{j,4}$}
    \put(8,0){$v_{j,5}$}
    \put(16,0){$v_{j,6}$}
    \put(24,0){$v_{j,7}$}
  }
}
\put(0,16){\color{blue}
  \put(-1,8.5){\put(0,0){$v_{j-1,0}$}
    \put(8,0){$v_{j-1,1}$}
    \put(16,0){$v_{j-1,2}$}
    \put(24,0){$v_{j-1,3}$}
  }
  \put(15,4.5){\put(0,0){$v_{j-1,4}$}
    \put(8,0){$v_{j-1,5}$}
    \put(16,0){$v_{j-1,6}$}
    \put(24,0){$v_{j-1,7}$}
  }
}
\put(32,0){\color{magenta}
  \put(-1,8.5){\put(0,0){$v_{j+1,0}$}
    \put(8,0){$v_{j+1,1}$}
    \put(16,0){$v_{j+1,2}$}
    \put(24,0){$v_{j+1,3}$}
  }
  \put(15,4.5){\put(0,0){$v_{j+1,4}$}
    \put(8,0){$v_{j+1,5}$}
    \put(16,0){$v_{j+1,6}$}
    \put(24,0){$v_{j+1,7}$}
  }
}
}
\end{picture}
\caption{to transform dynamics from the fine grid, bottom, to the coarse grid, top, rewrite the dynamics of~$u_i$ as the dynamics of~$v_{j,i}$ on overlapping elements, here see three consecutive elements (blue, black and magenta), by duplicating and renaming variables: the solid discs correspond to differential equations showing the number of dynamic variables is doubled, and the circles correspond to algebraic coupling equations.  Each element has two halves, also shown separated for clarity.}
\label{fig:oepsad}
\end{figure}

This section explores how to transform the discrete dynamics of variables~$u_i(t)$ on a fine grid of spacing~$h$, into discrete dynamics of variables~$U_j(t)$ on a coarser grid of spacing~$H=2h$.

Figure~\ref{fig:oepsad} schematically shows that the $j$th~element stretches from a neighbourhood of~$X_{j-1}$ to a neighbourhood of~$X_{j+1}$.  Figure~\ref{fig:oepsad} also shows each element is divided into two halves, and the variables duplicated so that, notionally, $u_{2j}=v_{j-1,6}=v_{j,2}=v_{j,4}=v_{j+1,0}$ and $u_{2j+1}=v_{j-1,7}=v_{j,3}=v_{j,5}=v_{j+1,1}$\,.  Embed the fine grid dynamics in these overlapping elements in a space of double the dimensionality by treating $v_{j-1,6}$~and~$v_{j,2}$, and  $v_{j,5}$~and~$v_{j+1,1}$ as independently evolving variables.

In the embedding space we use the inner product
$\inpr{\vec v}{\vec w}=\sum_{j,i} v_{j,i}w_{j,i}$\,.

\subsection{Full coupling rules}

Analogous to the continuum dynamics of Sections~\ref{sec:sacc}--\ref{sec:add}, we need to find rules to couple neighbouring overlapping elements that are sufficiently generic that they are useful for modelling a wide variety of self-adjoint lattice dynamics.   Here we focus on the basic case of the coarse grid modelling of the discrete diffusion dynamics~\eqref{eq:dde}: equation~\eqref{eq:dde} applies at the internal dynamic variables $v_{j,1}$, $v_{j,2}$, $v_{j,5}$ and~$v_{j,6}$ of each element, the discs in Figure~\ref{fig:oepsad}.  Now take up the challenge of using the variables  $v_{j,0}$, $v_{j,4}$, $v_{j,5}$ and~$v_{j,7}$, the open circles in Figure~\ref{fig:oepsad}, to couple together not only the elements but also the two halves of each element.

Again analogous to the continuum dynamics of Sections~\ref{sec:sacc}--\ref{sec:add},  where rigorous theorems are invoked we require that the lattice dynamics are periodic in the fine grid with period~$2m$ in~$i$, and thus periodic on the coarse grid with period~$m$ in~$j$.  

Let us identify the possible domain of evolution and coupling rules.
Following Section~\ref{sec:dednc}, one evolution rule within each element is
\begin{equation}
D\dot{\vec v}_j=L\vec v_j
\label{eq:ddej}
\end{equation}
where $\vec v_j=(v_{j,0},\ldots,v_{j,7})$, $D=\operatorname{diag}(0,1,1,0,0,1,1,0)$ and equation~\eqref{eq:ll2} gives the matrix~$L$.  As in Section~\ref{sec:dednc}, to preserve the self-adjoint isotropic conservation of the dynamics the upright entries in the matrix~$L$ are fixed; we can only modify the italic entries. As in Section~\ref{sec:dednc}, one set of possible changes to~$L$ is spanned by $Y_1$, $Y_2$ and~$Y_3$.  This set changes only internal interactions.  There are also three basic matrices that couple elements together \emph{preserving self-adjoint isotropic} conservation, with the additional constraint that the coupling has to be \emph{`local' on the grid}.  Let $\bshft\pm$~denote the shift operators from one element to its neighbours: define $\bshft\pm v_{j,i}=v_{j\pm1,i}$\,.  Then write three basis matrices for nearest neighbour coupling changes to~$L$ as
\begin{eqnarray*}&&
Y_4=\begin{bmatrix}
1 & 0 & 0 & 0 & 0 & 0 & 0 & -\bshft-\\
0 & 0 & 0 & 0 & 0 & 0 & 0 & 0\\
0 & 0 & 0 & 0 & 0 & 0 & 0 & 0\\
0 & 0 & 0 & 0 & 0 & 0 & 0 & 0\\
0 & 0 & 0 & 0 & 0 & 0 & 0 & 0\\
0 & 0 & 0 & 0 & 0 & 0 & 0 & 0\\
0 & 0 & 0 & 0 & 0 & 0 & 0 & 0\\
-\bshft+ & 0 & 0 & 0 & 0 & 0 & 0 & 1
\end{bmatrix},
\\&&
Y_5=\begin{bmatrix}
-1 & 0 & 0 & 0 & \bshft- & 0 & 0 & 0 \\
0 & 0 & 0 & 0 & 0 & 0 & 0 & 0 \\
0 & 0 & 0 & 0 & 0 & 0 & 0 & 0 \\
0 & 0 & 0 & 0 & 0 & 0 & 0 & \bshft- \\
\bshft+ & 0 & 0 & 0 & 0 & 0 & 0 & 0\\
0 & 0 & 0 & 0 & 0 & 0 & 0 & 0 \\
0 & 0 & 0 & 0 & 0 & 0 & 0 & 0 \\
0 & 0 & 0 & \bshft+ & 0 & 0 & 0 & -1 
\end{bmatrix},
\\&&
Y_6=\begin{bmatrix}
1 & 0 & 0 & -\bshft- & 0 & 0 & 0 & 0 \\
0 & 0 & 0 & 0 & 0 & 0 & 0 & 0 \\
0 & 0 & 0 & 0 & 0 & 0 & 0 & 0 \\
-\bshft+ & 0 & 0 & 0 & 0 & 0 & 0 & 0 \\
0 & 0 & 0 & 0 & 0 & 0 & 0 & -\bshft-\\
0 & 0 & 0 & 0 & 0 & 0 & 0 & 0 \\
0 & 0 & 0 & 0 & 0 & 0 & 0 & 0 \\
0 & 0 & 0 & 0 & -\bshft+ & 0 & 0 & 1 
\end{bmatrix}.
\end{eqnarray*}
For example, the non-zero elements of $Y_4\vec v_j$ are $v_{j,0}-v_{j-1,7}$ and $v_{j,7}-v_{j+1,0}$ which connects end values of neighbouring overlapping elements.  Thus we explore the self-adjoint isotropic conservative dynamics of
\begin{equation}
D\dot{\vec v}_j=\left(L+\sum_{l=1}^6 y_lY_l\right)\vec v_j\,,
\label{eq:ddejx}
\end{equation}
for some coefficients~$y_l$ that we are free to choose.

When fully coupled, the element dynamics~\eqref{eq:ddejx} must reduce to that of the discrete diffusion~\eqref{eq:dde}.  These are most compactly expressed in terms of the fine grid shift operator~$\shft\pm$ defined as $\shft\pm v_{j,i}=v_{j,i\pm1}$\,.  (When used appropriately, the coarse grid shift $\bshft\pm=\shft\pm^2$\,.)  Then the fine grid diffusion, expressed as $\dot u_i=(\shft+-2+\shft-)u_i$\,, has `eigenvalue'~$(\shft+-2+\shft-)$ corresponding to `eigenvector' $\vec u=(\ldots,\shft-^2,\shft-,1,\shft+,\shft+^2,\ldots)$ where all are interpreted in an operator sense.  For the element dynamics~\eqref{eq:ddejx} to reproduce these dynamics exactly, the dynamics must have the same `eigenvalue'~$(\shft+-2+\shft-)$ but corresponding to the `eigenvector' $\vec v_j=\bshft+^j(1,\shft+,\shft+^2,\shft+^3,\shft+^2,\shft+^3,\shft+^4,\shft+^5)$.  Substituting these into~\eqref{eq:ddejx}, equating coefficients of $\shft\pm$ in all components gives a set of equations that uniquely determine $y_1=y_5=y_6=1$ and $y_2=y_3=y_4=0$\,.  Thus the unique operator that gives the correct evolution of the diffusion~\eqref{eq:dde} with the correct subgrid microstructure for the diffusion is
\begin{equation}
L_1=\begin{bmatrix} -1&1&0&-\bshft-&\bshft-&0&0&0\\
       1&-2&1&0&0&0&0&0\\
       0&1&-2&1&0&0&0&0\\
       -\bshft+&0&1&-1&0&0&0&\bshft-\\
       \bshft+&0&0&0&-1&1&0&-\bshft-\\
       0&0&0&0&1&-2&1&0\\
       0&0&0&0&0&1&-2&1\\
       0&0&0&\bshft+&-\bshft+&0&1&-1
\end{bmatrix}.
\label{eq:ll1}
\end{equation}
In a homotopy from a useful base for centre manifold theory, we must end the homotopy at this operator for the fully coupled dynamics on the elements.

\subsection{A homotopy connects elements}
\label{sec:hce}

Now seek a base of decoupled elements from which a slow manifold may be constructed to model the coarse grid dynamics.
Perhaps the closest operator to the fully coupled~$L_1$ of~\eqref{eq:ll1} is simply to change the inter-element coupling shift operators~$\bshft\pm$ to be ones: that is, define
\begin{equation}
L_0=L+Y_1+Y_3=\begin{bmatrix} -1&1&0&-1&1&0&0&0\\
       1&-2&1&0&0&0&0&0\\
       0&1&-2&1&0&0&0&0\\
       -1&0&1&-1&0&0&0&1\\
       1&0&0&0&-1&1&0&-1\\
       0&0&0&0&1&-2&1&0\\
       0&0&0&0&0&1&-2&1\\
       0&0&0&1&-1&0&1&-1
\end{bmatrix}.
\label{eq:ll0}
\end{equation}
Then straightforward algebra derives that the spectrum of $D\dot{\vec v}_j=L_0\vec v_j$ is $\{0,-2/3,-2,-4\}$ for each of the $m$~decoupled elements.  The zero eigenvalue with all the rest negative implies, Section~\ref{sec:cmtsm}, that there exists a relevant slow manifold model~\cite[e.g.]{Carr81, Kuznetsov95} which we can construct with one dimension for each element---the dynamics on the $m$-dimensional slow manifold forms the coarse grid model.

But is~$L_0$ a good choice? and, is it the only choice?  The spectrum of the homotopy answers.  Create a general homotopy from decoupled elements to fully coupled by defining the convex combination
\begin{displaymath}
L_\gamma = (1-\gamma)\left[ L_0+y_1Y_1+y_2Y_2+y_3Y_3 \right] +\gamma L_1\,,
\end{displaymath}
where parameter~$\gamma$ morphs the operator from the decoupled case, $\gamma=0$\,, to the fully coupled case, $\gamma=1$\,.  The characteristic polynomial, $\det(L_\gamma-\lambda D)$, is too hideous to record here, but computer algebra derives it easily.  Then computer algebra iteration finds asymptotic approximations to the eigenvalue near zero:
\begin{displaymath}
\lambda=\gamma^2\rat1{16}(\bshft-^2-2+\bshft+^2)\big[
1+\rat14(y_1-18y_2-10y_3)\big] +\Ord{\gamma^3+|\vec y|^2}\,.
\end{displaymath}
The factor $\rat1{16}(\bshft-^2-2+\bshft+^2)$, although unexpectedly involving  double shifts over the coarse grid, is exactly what we want as to leading order it is the centred second difference~$\delta^2$ of the fine grid diffusion dynamics~\eqref{eq:dde}.  However, the term linear in~$y_l$ ruins this identity and so we choose $y_1=18y_2+10y_3$\,.  Similar computer algebra to higher order in~$|\vec y|$ indicates we also need $y_3=-3y_2$\,, and analysis to higher order in coupling parameter~$\gamma$ indicates that then we need $y_2=0$\,.  To best match the dynamics of the discrete diffusion~\eqref{eq:dde} we choose $y_1=y_2=y_3=0$\,. Consequently, accuracy of the coarse grid model for just simple diffusion requires us to use the convex combination operator
\begin{equation}
L_\gamma = \gamd L_0 +\gamma L_1\,,
\qtq{where normally} \gamd=1-\gamma\,,
\label{eq:llgam}
\end{equation}
as the homotopy to smoothly connect the decoupled matrix~\eqref{eq:ll0} with the fully coupled operator~\eqref{eq:ll1}.

\paragraph{Inhomogeneity requires generalisation}
Linearisation in Section~\ref{sec:cmtsm} leads us to consider the class of inhomogeneous, discrete, reaction-diffusion equations on the fine grid of
\begin{equation}
\dot u_i=\delta(f_i\delta u_i)+\alpha g_iu_i
=f_{i-\frac12}u_{i-1}
-(f_{i-\frac12}+f_{i+ \frac12})u_i
+f_{i+ \frac12}u_{i+1} -\alpha g_iu_i\,,
\label{eq:idrde}
\end{equation}
for some spatially varying `reactions'~$g_i$ and  `diffusivities'~$f_{i\pm\frac12}$ governing the dynamic exchange between $u_i$~and~$u_{i\pm1}$.  Equation~\eqref{eq:idrde} describes relatively general self-adjoint dynamics on the fine grid. 

Embed the fine grid dynamics~\eqref{eq:idrde} into the dynamics on the finite elements shown in Figure~\ref{fig:oepsad} and generalising the self-adjoint, consistent operator~\eqref{eq:llgam} by considering the dynamics
\begin{equation}
D\dot{\vec v}_j= F_j\vec v_j
+\alpha \vec g_j\,,
\label{eq:edrde}
\end{equation}
where the linear `reaction'
\begin{displaymath}
\vec g_j=\big(0,g_{2j-\frac32}v_{j,1},g_{2j-\frac12}v_{j,2},0
             ,0,g_{2j+\frac12}v_{j,5},g_{2j+\frac32}v_{j,6},0 \big)\,,
\end{displaymath}
and where the diffusivity matrix 
\begin{eqnarray}
F_j&=&\begin{bmatrix}
F_{1,1}&F_{1,2}\\F_{2,1}&F_{2,2}
\end{bmatrix}
\qtq{for the four sub-blocks} \label{eq:edrdef} \\
F_{1,1}&=&\begin{bmatrix}
-f_{2j-2}&+f_{2j-2}&0&-(\gamd +\gamma\bshft-)f_{2j} \\
+f_{2j-2}&-f_{2j-2}-f_{2j-1}&+f_{2j-1}&0 \\
0&+f_{2j-1}&-f_{2j-1}-f_{2j}&+f_{2j} \\
-f_{2j}(\gamd +\gamma\bshft+)&0&+f_{2j}&-f_{2j} \\
\end{bmatrix},
\nonumber\\
F_{2,2}&=&\begin{bmatrix}
-f_{2j}&+f_{2j}&0&-f_{2j}(\gamd +\gamma\bshft-) \\
+f_{2j}&-f_{2j}-f_{2j+1}&+f_{2j+1}&0 \\
0&+f_{2j+1}&-f_{2j+1}-f_{2j+2}&+f_{2j+2} \\
-(\gamd +\gamma\bshft+)f_{2j}&0&+f_{2j+2}&-f_{2j+2} \\
\end{bmatrix},
\nonumber\\
F_{1,2}&=&\diag[ 
(\gamd +\gamma\bshft-)f_{2j}, 0, 0, f_{2j}(\gamd +\gamma\bshft-) ],
\nonumber\\
F_{2,1}&=&\diag[ 
f_{2j}(\gamd +\gamma\bshft+), 0, 0, (\gamd +\gamma\bshft+)f_{2j} ],
\nonumber
\end{eqnarray}
recalling that $\bshft\pm f_{2j}v_{j,i}=f_{2j\pm2}v_{j\pm1,i}$\,.
The operator on the right-hand side of~\eqref{eq:edrde} is self-adjoint:  the only subtlety arises via the operators coupling elements.  For example, suppose the operator $\bshft\pm f_{2j}$ occurs at element~$(i,k)$ in~$F_j$: then the inner product $\inpr{ \vec w}{F\vec v}$ has the components
\begin{eqnarray*}
\sum_j w_{j,i}\bshft\pm (f_{2j}v_{j,k})
&=& \sum_j w_{j,i}f_{2j\pm2}v_{j\pm1,k}
\\&=& \sum_j w_{j\mp1,i}f_{2j}v_{j,k}
\\&=& \sum_j f_{2j}(\bshft\mp w_{j,i})v_{j,k}
\end{eqnarray*}
which suitably corresponds to the $(k,i)$~elements of~$F_j$ in the inner product $\inpr{ F\vec w}{ \vec v}$.  Thus the dynamics of the embedded, coupled, finite element, system~\eqref{eq:edrde} preserves the self-adjointness in the fine grid inhomogeneous dynamics~\eqref{eq:idrde} in a manner consistent with the required homotopy~\eqref{eq:llgam} of homogeneous dynamics.

\subsection{Centre manifold theory supports a coarse grid model}
\label{sec:cmtsm}

We have arrived at a separation of the self-adjoint, isotropic, diffusion dynamics~\eqref{eq:dde} into elements, drawn schematically in Figure~\ref{fig:oepsad}, with coupling between the elements that ranges from decoupled to fully coupled, and preserves the self-adjoint nature of the linear dynamics.  We now invoke centre manifold theory~\cite[e.g.]{Carr81, Kuznetsov95} to rigorously support coarse grid modelling of \emph{nonlinear} lattice dynamics.  The nonlinear dynamics are introduced, embedded into elements, and then linearised to connect to the preceding results.

In analogy with the continuum dynamics of~\eqref{eq:nde}, here we consider the class of fine grid dynamics expressible by the general nonlinear, local interaction, lattice rule
\begin{equation}
\dot u_i=\delta\big[f(i,\mu u_i,\delta u_i)\delta u_i \big] 
+\alpha g(i,u_i, \mu\delta u_i)\,,
\label{eq:ndde}
\end{equation} 
for sufficiently smooth `diffusivities'~$f$ and `reactions'~$g$.  For definite theoretical statements, suppose the fine grid lattice and the dynamics~\eqref{eq:ndde} on it are $2m$-periodic in~$i$.

Embed the fine grid, $2m$-periodic, nonlinear reaction-diffusion dynamics~\eqref{eq:ndde} into the higher dimensional dynamics of~$\vec v_j$ within the corresponding $m$~elements of Figure~\ref{fig:oepsad} by a nonlinear version of~\eqref{eq:edrde}. To help write the map from the fine grid to the overlapping elements let $i'=(i-\rat72)-\operatorname{sign}(i-\rat72)$, then the nonlinear fine grid dynamics~\eqref{eq:ndde} in each element are
\begin{equation}
\dot v_{j,i}=\delta\big[ f_{2j+i'}\delta v_{j,i} \big]
+\alpha g_{2j+i'},
\quad i=1,2,5,6,
\label{eq:ndde1}
\end{equation}
where $f_{2j+i'}=f(2j+i',\mu v_{j,i},\delta v_{j,i})$ and $g_{2j+i'}=g(2j+i',v_{j,i},\mu\delta v_{j,i})$.  
Control the coupling of these to neighbouring elements by the conditions
\begin{eqnarray}
&&f_{2j\pm2}\delta v_{j,i}+(\gamd+\gamma\bshft\pm) 
\big[ f_{2j} \delta v_{j,\frac72} \big]=0\,,
\quad i=\scriptsize\left\{\begin{array}{c}
13/2\\ 1/2 \end{array}\right.,
\\&& f_{2j}(\gamd+\gamma\bshft+)v_{j,0}
+f_{2j}\delta v_{j,i}-f_{2j}(\gamd+\gamma\bshft-)v_{j,7}=0\,,
\quad i=\rat52,\rat72\,.
\label{eq:ndde3}
\end{eqnarray}
The embedded element dynamics~\eqref{eq:ndde1}--\eqref{eq:ndde3} then has a useful $m$~dimensional subspace~$\cE_0$ of equilibria: $\alpha=\gamma=0$ and $v_{j,i}=U_j$ constant in each of the $m$~elements.

Linearise the dynamics about each equilibria in~$\cE_0$ to obtain~\eqref{eq:edrde} without reaction, $\alpha=0$\,, and with diffusivites $f_{2j+i}=f(2j+i,U_j,0)$.  As the coupling parameter $\gamma=0$\,, each element is isolated.  When the diffusivities~$f_i$ are constant, each element has spectrum proportional to $\{0,-2/3,-2,-4\}$ ---the zero eigenvalue corresponds to $v_{j,i}$~being constant in each element. Elementary algebra shows that provided for all~$j$
\begin{equation}
f_j>0 \qtq{and}
2f_{2j}(f_{2j-2}+f_{2j+2})-f_{2j-2}f_{2j+2}>0\,,
\label{eq:prespec}
\end{equation}  
then the spectrum remains as one zero eigenvalue with the other three being negative. Consequently, centre manifold theory assures us of the following corollary~\cite[e.g.]{Carr81, Kuznetsov95}.

\begin{corollary}[slow manifold] \label{thm:cmtd}
Provided~\eqref{eq:prespec}, in some finite neighbourhood of the subspace~$\cE_0$:
\begin{enumerate}
\item there exists a $(m+2)$~dimensional slow manifold~$\cM_0$ of the nonlinear, fine grid, element dynamics~\eqref{eq:ndde1}--\eqref{eq:ndde3}, one dimension for each element, one for the inter-element coupling parameter~$\gamma$, and one for the amplitude parameter~$\alpha$ of the reaction;

\item the slow manifold~$\cM_0$ may be parametrised by any reasonable measure~$U_j$ of~$v_{j,i}$ in each element, that is, the slow manifold and the evolution thereon may be written, for some $\vec v_j$~and~$g_j$, $j=1,\ldots,m$\,,  as 
\begin{equation}
\vec v_j=\vec v_j(\vec U,\gamma,\alpha)
\qtq{such that}
\dot U_j=\frac{dU_j}{dt}=g_j(\vec U,\gamma,\alpha)\,;
\label{eq:smd}
\end{equation}

\item \label{i:reld} the dynamics on~$\cM_0$ is `asymptotically complete'~\cite{Robinson96} in that from all initial conditions in some neighbourhood of~$\cM_0$, there exists a solution of~\eqref{eq:smd} approached exponentially quickly in time by the solution of~\eqref{eq:ndde1}--\eqref{eq:ndde3};

\item \label{i:accd}   the order of error of an \emph{approximation} to the slow manifold~$\cM_0$ \emph{and} its evolution,~\eqref{eq:smd}, is the same as the order of the residuals of the governing dynamics~\eqref{eq:ndde1}--\eqref{eq:ndde3}, in the coupling parameter~$\gamma$ and reaction parameter~$\alpha$, when evaluated at the approximation.
\end{enumerate}  
\end{corollary}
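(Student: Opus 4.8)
The plan is to verify that the embedded element dynamics~\eqref{eq:ndde1}--\eqref{eq:ndde3}, after adjoining the trivial equations $\dot\gamma=0$ and $\dot\alpha=0$, fit the hypotheses of the centre manifold theorem of~\cite{Carr81, Kuznetsov95}, and then to read items~1--4 off as its standard conclusions, exactly as was done for the continuum Corollary~\ref{thm:cmt}. First I would promote the parameters to dynamical variables, so the extended state is $(\vec v_1,\dots,\vec v_m,\gamma,\alpha)$ and $\cE_0$ (where $\gamma=\alpha=0$ and each $v_{j,i}=U_j$) is a subspace of equilibria of the extended system. Before the theorem can be applied I must reduce the differential-algebraic form to a genuine ODE: the four vanishing rows of~$D$ in each element, together with the coupling conditions~\eqref{eq:ndde3} and the half-element identifications, express $v_{j,0},v_{j,3},v_{j,4},v_{j,7}$ smoothly in terms of the dynamic variables $v_{j',1},v_{j',2},v_{j',5},v_{j',6}$ of neighbouring elements; since the lattice is $2m$-periodic this amounts to inverting a fixed, nonsingular circulant-type matrix, so~\eqref{eq:ndde1}--\eqref{eq:ndde3} genuinely becomes a $(4m+2)$-dimensional system of smooth ODEs. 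This reduction, and checking the relevant matrix is invertible, is the step I expect to need the most care.

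Next I would linearise the reduced ODE about an arbitrary point of the relevant part of~$\cE_0$. The $\gamma$- and $\alpha$-directions contribute two zero eigenvalues. Because $\gamma=0$ decouples the elements, the remaining spectrum is the union over~$j$ of the spectrum of the single-element operator $D\dot{\vec v}_j=F_j^{(0)}\vec v_j$ with $f_{2j+i}=f(2j+i,U_j,0)$; for constant diffusivities this is $m$ copies of $\{0,-2/3,-2,-4\}$, and the hypothesis~\eqref{eq:prespec} is precisely what keeps each block with one zero eigenvalue and three strictly negative ones. Hence the linearisation has an $(m+2)$-dimensional centre subspace — the $m$ element-constant modes together with the $\gamma,\alpha$ directions — and a $3m$-dimensional stable subspace; the system being finite-dimensional, there is automatically a spectral gap between them and no unstable directions.

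Then I would invoke the centre manifold theorem. Smoothness of $f$ and $g$ on a neighbourhood of the admissible part of~$\cE_0$ yields a locally invariant $C^k$ slow manifold~$\cM_0$ of dimension $m+2$ (item~1). The absence of unstable directions together with the spectral gap gives local exponential attraction, which via~\cite{Robinson96} is exactly the asymptotic completeness of item~3, with rate governed by the least negative eigenvalue among the element blocks. The standard approximation theorem for centre manifolds gives item~4: the order of error of any approximate manifold \emph{and} its reduced vector field matches the order to which that approximation fails to satisfy the governing dynamics~\eqref{eq:ndde1}--\eqref{eq:ndde3}, measured in $\gamma$ and~$\alpha$. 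For item~2, the zero eigenvector of each element block is the constant field, on which any reasonable amplitude functional~$U_j$ is nonzero; hence the map $\cM_0\to(\vec U,\gamma,\alpha)$ is a local diffeomorphism by the implicit function theorem, and $\cM_0$ together with its flow can be re-graphed in the form~\eqref{eq:smd}. As in Theorem~\ref{thm:sac}, the argument carries over with only notational changes to vector-valued $v_{j,i}$ with symmetric-matrix diffusivities.
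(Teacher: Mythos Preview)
Your proposal is correct and follows the same route as the paper: establish the spectral splitting at $\gamma=\alpha=0$ under~\eqref{eq:prespec} and then invoke the standard centre manifold package~\cite{Carr81, Kuznetsov95} to read off existence, parametrisation, exponential attraction, and the approximation property. The paper itself offers no argument beyond the single clause ``Consequently, centre manifold theory assures us of the following corollary''; your write-up is considerably more careful than what appears there.

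The one genuine addition in your plan is the explicit reduction of the differential-algebraic system $D\dot{\vec v}_j=F_j\vec v_j+\cdots$ to a bona fide \textsc{ode} before invoking the theorem. The paper sidesteps this entirely, working instead with the generalised eigenvalue problem $\det(L_\gamma-\lambda D)$ and quoting the spectrum $\{0,-2/3,-2,-4\}$ directly. Your route is more honest about what the cited theorems actually require, and the $4\times4$ constraint block you would need to invert at $\gamma=0$ is indeed nonsingular (its determinant is~$-3$ in the constant-coefficient case), so the reduction goes through. Either way of bookkeeping the spectrum leads to the same $(m+2)$-dimensional centre subspace, so the two arguments are equivalent in substance.
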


\paragraph{Basic example of linear diffusion}
The simplest example of the class of lattice dynamics to which Corollary~\ref{thm:cmtd} applies is the linear discrete diffusion~\eqref{eq:dde}.
Computer algebra~\cite[\S4]{Roberts08j} readily iterates to asymptotically approximate the slow manifold model.   Here we choose to parametrise the slow manifold of coarse scale dynamics in terms of the coarse variables
\begin{equation}
U_j=\rat14(v_{j,2}+v_{j,3}+v_{j,4}+v_{j,5})\,,
\end{equation}
which Figure~\ref{fig:oepsad} shows to be estimates of the mid-element values of the fine grid variables. Executing the computer algebra deduces that the fine grid, intraelement structure is
\begin{eqnarray*}
\vec v_j&=&\Big[ \phantom{+\gamma} (1,1,1,1,1,1,1,1)
\\&&{}
+\frac\gamma4 (-5,-3,-1,1,-1,1,3,5)\bmu\bdelta
\\&&{}
+\frac{\gamma^2}4 (3,1,0,0,0,0,1,3)(\bdelta^2 +\rat14\bdelta^4)
\\&&{}
+\frac{\gamma^2}{16}(13,7,1,-5,5,-1,-7,-13)\bmu\bdelta^3
\Big] U_j
+\Ord{\gamma^3},
\end{eqnarray*}
in terms of the centred mean~$\bmu$ and difference~$\bdelta$ operators on the coarse grid.  The first line gives the piecewise constant basis for the slow subspace~$\cE_0$ of equilibria.  The second line gives a linear variation between neighbouring elements, the third line quadratic, and so on.  Corollary~\ref{thm:cmtd} assures us that no matter what the initial conditions for the fine grid diffusion~\eqref{eq:dde}, exponentially quickly the system will settle onto the slow manifold with the above structure on the fine grid.

The corresponding evolution of the coarse grid variables~$U_j$ is
\begin{eqnarray}
\dot U_j&=&\Big[ \gamma^2\left( \rat14\bdelta^2+\rat1{16}\bdelta^4 \right)
+\gamma^3\left( -\rat5{16}\bdelta^4 -\rat5{64}\bdelta^6 \right)
\nonumber\\&&{}
+\gamma^4\left( \rat{15}{64}\bdelta^4 +\rat{53}{128}\bdelta^6 +\rat{91}{1024}\bdelta^8 \right) \Big]U_j
+\Ord{\gamma^5}\,.
\label{eq:cgm}
\end{eqnarray}
Obtain the lowest accuracy model from the first line, by neglecting terms $\Ord{\gamma^3}$, and then evaluating at the physical fully coupled case $\gamma=1$\,:  the model is $\dot U_j=\rat1{16}(U_{j-2}-2U_j+U_{j+2})$ which is appropriate although surprisingly only involves every second coarse grid value.  This `surprise' was forecast in Section~\ref{sec:hce} by the leading operator eigenvalue of the operator~$L_\gamma$. Higher orders in coupling parameter~$\gamma$ give coarse grid models with a wider stencil, and of more accuracy.  For example, we see the accuracy through the \emph{equivalent fine grid expression} of the coarse grid model~\eqref{eq:cgm} obtained by the operator identity that $\bdelta^2=4\delta^2+\delta^4$\,: in terms of fine grid differences~$\delta$, the coarse model~\eqref{eq:cgm} is
\begin{displaymath}
\dot U_j=\left[ \gamma^2\delta^2 
+\rat54\gamma^2(1-\gamma)(1-3\gamma)\delta^4 \right]U_j
+\Ord{\delta^6,\gamma^5}\,.
\end{displaymath} 
This equivalent fine grid expression demonstrates that when evaluated for full coupling, $\gamma=1$\,, the fourth order differences vanish to leave the correct equivalent model $\dot U_j\approx\delta^2 U_j$\,. Computer algebra~\cite[\S4]{Roberts08j} to high order in coupling~$\gamma$ confirms that higher order differences in the equivalent fine grid expression similarly vanish.  Thus the coarse grid model~\eqref{eq:cgm} is an accurate closure for the coarse scales of the fine grid dynamics.

As introduced in earlier work with non-self-adjoint coupling~\cite{Roberts08c}, such a mapping of lattice dynamics from fine grid scale to the coarser scale can be iterated and renormalised to cover step-by-step the wide range of length and time scales on a multigrid~\cite[e.g.]{Briggs01,Thomas03}.  Such step-by-step dynamical transformations could empower us, in future research, to carefully explore the development of emergent phenomena in nonlinear and stochastic systems over multigrids.

\section{Three further applications indicates range}
\label{sec:tfair}

The previous section concluded by modelling the dynamics of discrete diffusion.  By itself, discrete diffusion is well understood.  The value of the preceding section is that it empowers us to model more complicated dynamics from the same base with the same powerful centre manifold support.  This section introduces three interesting applications.

\subsection{Reaction-diffusion lattice dynamics}

Centre manifold theory, as recorded in Corollary~\ref{thm:cmtd}, applies to nonlinear dynamics such as the coarse grid modelling of the discrete reaction-diffusion equation~\cite[e.g.]{Wollkind94}
\begin{equation}
\dot u_i=\delta^2u_i +\alpha(u_i-u_i^2)\,,
\label{eq:drde}
\end{equation}
where $\alpha(u_i-u_i^2)$ is some example nonlinear reaction.  

Simple modifications of earlier code~\cite[\S5]{Roberts08j} gives computer algebra that constructs the slow manifold model of the coarse grid evolution.  Executing the code derives, for example, the coarse grid model
\begin{eqnarray}
\dot U_j&=& \frac{\gamma^2}{16}(U_{j+2}-2U_j+U_{j-2})
+\alpha(U_j-U_j^2)
\nonumber\\&&{}
+\frac{\alpha\gamma^2}{64}\big(
14U_j^2-10U_jU_{j+1}^2
-5U_{j+1}^2-4U_jU_{j+2}
+10U_{j+1}U_{j+2}\nonumber\\&&\quad{}
-3U_{j+2}^2-10U_jU_{j-1}
+10U_{j-1}U_{j+1}-5U_{j-1}^2-4U_jU_{j-2}
\nonumber\\&&\quad{}
+10U_{j-1}U_{j-2}-3U_{j-2}^2 \big)
+\Ord{\gamma^3,\alpha^2}.
\end{eqnarray}
The first line, at full coupling $\gamma=1$, is a classic model of the reaction-diffusion~\eqref{eq:drde}.  The second and subsequent lines account for sub-element reaction and diffusion, interacting together and with neighbouring elements.  Such terms are required for an accurate closure of the nonlinear fine grid dynamics on the coarse grid.

\subsection{Homogenisation}
\label{sec:h}

A critical issue in material science is the effective large scale properties of a composite material with significant microscopic structure.  A canonical problem is the effective large scale diffusion through a domain with microscopic variations in diffusion coefficient~\cite[e.g.]{Brandt00, Samaey03b, Arbogast06}.  Here we transform diffusion on a fine grid, with fine grid variations in coefficient, into diffusion onto a coarser grid.  Here we provide a new and powerful view of the classic result that, to leading order, the coarse grid diffusion is a local average of the fine grid diffusion.

Consider the fine grid dynamics~\eqref{eq:ndde}, without reaction $\alpha=0$\,, where the diffusion coefficient governing flux between fine grid points $x_{i-\frac12}$~and~$x_{i+\frac12}$ is $f_i=1+\epsilon\kappa_i$\,. The parameter~$\epsilon$ controls the overall size of the variations in the diffusivity.  For simplicity in construction and interpretation I treat the variations in diffusivity as small; that is, we construct the slow manifold as a power series in~$\epsilon$. Computer algebra then finds the slow manifold as a power series in the strength~$\epsilon$.  Straightforward modifications to earlier computer algebra~\cite[\S6]{Roberts08j} finds, for the example~\eqref{eq:introhomo} mentioned in the Introduction,
\begin{equation}
\dot U_j=\gamma^2\rat1{16}\big\{{\cal K}_{j-1}U_{j-2}
-({\cal K}_{j-1}+{\cal K}_{j+1})U_j
+{\cal K}_{j+1}U_{j+2} \big\} +\Ord{\gamma^3} \,,
\label{eq:cgddeh}
\end{equation}
where the coarse grid effective diffusivity coefficients
\begin{eqnarray*}
{\cal K}_j&=&
1+\epsilon\rat14\big[ \kappa_{2j-2} +\kappa_{2j-1} +\kappa_{2j+1} +\kappa_{2j+2}\big]
\\&&{}
+\epsilon^2\rat1{16}\big[
-(\kappa_{2j-2}+\kappa_{2j-1}-\kappa_{2j+1}-\kappa_{2j+2})^2
\\&&\quad{}
-2(\kappa_{2j-2}-\kappa_{2j-1})^2
-2(\kappa_{2j+2}-\kappa_{2j+1})^2
\big] +\Ord{\epsilon^3}\,.
\end{eqnarray*}
Because the element coupling preserves self-adjoint symmetry we find that the coarse grid, slow manifold model~\eqref{eq:cgddeh} is indeed self-adjoint with these particular effective diffusivities.

Even more beautiful is that the effective diffusivity on the coarse grid is \emph{local}: the diffusivity~${\cal K}_{j\pm1}$ governing the flux between coarse grid points $X_j$~and~$X_{j\pm2}$ depends only upon the fine scale diffusivities between $X_j$~and~$X_{j\pm2}$, namely between $\kappa_{2j}$~and~$\kappa_{2j\pm4}$. This beautiful feature is not built into the approach, but appears as a natural consequence of this scheme to preserve self-adjoint properties.

\paragraph{Zigzag microstructure}
The specific zigzag microstructure $\kappa_i=(-1)^i$ is straightforward to analyse to higher order~\cite[\S6]{Roberts08j}, and the resultant macroscale model compact enough to record. With zigzag microscale diffusivity, the diffusivity is uniform on the coarse grid: writing the coarse scale evolution as $\dot U_j={\cal D}U_j$\,, the coarse grid diffusion operator
\begin{eqnarray}
{\cal D}&=&\rat14\gamma^2(1-\epsilon^2)(1+\rat1{4}\bdelta^2)\bdelta^2
\Big\{ 1
-\rat1{4}\gamma(5+\epsilon)\bdelta^2
\nonumber\\&&{}
+\rat1{64}\gamma^2\big[60+16\epsilon+4\epsilon^2 
+(91+36\epsilon+5\epsilon^2)\bdelta^2\big]
\bdelta^2 \Big\}
\nonumber\\&&{}+\Ord{\gamma^5,\epsilon^5}\,.
\end{eqnarray}
As it should, the $1-\epsilon^2$ factor shows that the coarse scale diffusion is depressed by the microstructure, and indeed vanishes for $\epsilon=\pm1$  reflecting the vanishing of the microscale diffusivity between every second fine scale grid point that occurs at $\epsilon=\pm 1$\,.

\subsection{Pattern evolution}

An outstanding issue in modelling is the direct discrete, macroscale modelling of pattern evolution~\cite[e.g.]{Cross93,Brand88}.  Currently, in fields such as fluid convection and in some reaction-diffusion equations, we first model the evolution of rolls, spirals and spots by variants of the Ginzburg--Landau \pde\ derived assuming amplitudes and phases of the structures vary slowly in space.  Second, we then discretise the Ginzburg--Landau \pde\  for numerical simulations of the modulation of the pattern over large scales.  The challenge is to analyse the original system dynamics and \emph{directly} generate such a macroscale discretisation in \emph{one step}.  My first attempt to do this showed potential~\cite{Roberts01d}, but the coupling conditions used therein did not preserve the self-adjoint nature of the dynamics and so the discretisation unsatisfactorily did not preserve required symmetries.  Now with self-adjoint coupling conditions we return to modelling pattern evolution. 

\paragraph{Linear dynamics} On a lattice, the simplest microscale pattern is perhaps the long lasting, zigzag mode of the discrete equation
\begin{equation}
\dot u_i=-4\mu^2 u_i=-u_{i-1}-2u_i-u_{i+1}\,.
\label{eq:dzze}
\end{equation}
Now adapt the self-adjoint coupling conditions and analysis of Section~\ref{sec:oepsad} to derive a model of these zigzag mode dynamics on the coarser grid.  Embed the dynamics on the elements shown in Figure~\ref{fig:oepsad} with the evolution rule
\begin{equation}
D\dot{\vec v}_j=Z_\gamma\vec v_j\,,
\end{equation}
where the `zigzag' operator
\begin{displaymath}
Z_\gamma=\begin{bmatrix}
-1&-1&0&\gamd +\gamma\bshft-&\gamd +\gamma\bshft-&0&0&0\\
-1&-2&-1&0&0&0&0&0\\
0&-1&-2&-1&0&0&0&0\\
\gamd +\gamma\bshft+&0&-1&-1&0&0&0&\gamd +\gamma\bshft-\\
\gamd +\gamma\bshft+&0&0&0&-1&-1&0&\gamd +\gamma\bshft-\\
0&0&0&0&-1&-2&-1&0\\
0&0&0&0&0&-1&-2&-1\\
0&0&0&\gamd +\gamma\bshft+&\gamd +\gamma\bshft+&0&-1&-1
\end{bmatrix}.
\end{displaymath}
Then based about the decoupled case of $\gamma=0$\,, centre manifold theory~\cite[e.g.]{Carr81, Kuznetsov95} similarly guarantees the existence and relevance of a slow manifold parametrised by the amplitude of the zigzag mode \emph{local} to each element.
Computer algebra~\cite[\S7]{Roberts08j}, modified from earlier code via spatial patterns changed to~$(-1)^i$ to account for the zigzag neutral mode, constructs the slow manifold model. 
The corresponding evolution of the coarse grid order parameters, the amplitudes~$U_j$, is exactly~\eqref{eq:cgm} discussed before.   The difference is that here the equation governs the local amplitude of the zigzag mode, rather than the local mean field.

\paragraph{Nonlinear amplitude modulation}
Even more interesting is nonlinear lattice dynamics which centre manifold theory also supports.  Modify the discrete equation~\eqref{eq:dzze} by a local quadratic nonlinearity to
\begin{equation}
\dot u_i=-4\mu^2 u_i+au_i^2=-u_{i-1}-2u_i-u_{i+1}+au_i^2\,.
\label{eq:dzzen}
\end{equation}
What is the corresponding coarser grid equation governing the local amplitude of the zigzag mode?  In particular, does the quadratic nonlinearity stabilise or destabilise the origin?

Simple modifications to the computer algebra then derives the coarser scale, \emph{modulation} equation
\begin{equation}
\dot U_j=\gamma^2(\rat14\bdelta^2+\rat1{16}\bdelta^4) U_j 
-a\gamma\rat14(\bdelta\bmu U_j)(\bdelta^2U_j)
+\rat12a^2U_j^3
+\Ord{\gamma^2+a^3}\,.
\label{eq:dgle}
\end{equation}
Evaluated at $\gamma=1$ this coarse grid model predicts that the quadratic nonlinearity in~\eqref{eq:dzzen} is destabilising as it generates the cubic growth term~$\rat12a^2U_j^3$\,.  In other problems, the coarse grid model will similarly form a discrete version of the Ginzburg--Landau equation.  One might derive such a Ginzburg--Landau equation by traditional modulation theory, but here  analysis of the fine scale grid dynamics generates the novel term $a\gamma\rat14(\bdelta\bmu U_j)(\bdelta^2U_j)$ which accounts for interactions over the scale of a few grid points and thus forms a more comprehensive closure.

This analysis empowers us to derive discrete models of nonlinear lattice pattern dynamics without having to assume the infinite scale separation required by traditional modulation theory.

\section{Conclusion}

This article uses centre manifold  theory to further develop a \emph{novel} approach to high quality coarse scale discrete models of nonlinear spatiotemporal systems.  The method is to divide space into \emph{overlapping} finite elements with specially crafted coupling conditions.  The innovative coupling conditions of Section~\ref{sec:sacc} not only engender centre manifold support, Section~\ref{sec:cmtsd}, and assure consistency for vanishing element size, but also preserve the self-adjoint symmetry which is often so important in applications.

The first half of this article extracts accurate finite scale discrete lattice models from analysis of the infinitesimal scale modelling of dissipative \pde{}s.  A companion problem is to extract an accurate coarse scale lattice model from a fine scale lattice model.  The second half of this article does this for the possibly extreme case when the coarse scale lattice is just a factor of two coarser than the fine scale lattice, and thus connects this approach to multigrid methods~\cite{Briggs01}.   The approach is again based upon dividing the fine lattice into overlapping elements with specially crafted coupling conditions, Section~\ref{sec:oepsad}. To supplement an earlier approach~\cite{Roberts08c}, here the coupling conditions preserve self-adjointness in the dynamics as seen in the three example applications of Section~\ref{sec:tfair}.

Importantly, centre manifold theory~\cite[e.g.]{Carr81, Kuznetsov95} supports modelling for finite spectral gap, not just the infinite spectral gap required by other methods.  Furthermore, the support applies, in principle, to a finite domain and for all time; the only approximation is in the approximation of the slow manifold. The approach developed here for deterministic dynamics in one spatial dimension should be extendable to both higher dimensions and stochastic dynamics.

\paragraph{Acknowledgement} I thank the Australian Research Council for support via grants DP0774311 and DP0988738.

\bibliographystyle{plain}
\bibliography{ajr,bib}

\begin{thebibliography}{10}

\bibitem{Arbogast06}
Todd Arbogast and Kirsten~J. Boyd.
\newblock Subgrid upscaling and mixed multiscale finite elements.
\newblock {\em SIAM~J. Numer. Anal.}, 44:1150--1171, 2006.
\newblock \doi{10.1137/050631811}.

\bibitem{Bokhove96}
O.~Bokhove and T.~G. Shepherd.
\newblock On the {H}amiltonian balanced dynamics and the slowest invariant
  manifold.
\newblock {\em J.~Atmos Sci}, 53(2):276--297, 1996.
\newblock \doi{10.1175/1520-0469(1996)053<0276:OHBDAT>2.0.CO;2}.

\bibitem{Brand88}
H.~R. Brand.
\newblock Phase dynamics---a review and a perspective.
\newblock {\em Propagation In Systems Far From Equilibrium}, pages 206--224,
  1988.

\bibitem{Brandt00}
Achi Brandt.
\newblock General highly accurate algebraic coarsening.
\newblock {\em Elect. Trans. Num. Anal.}, 10:1--20, 2000.
\newblock
  \url{http://www.emis.ams.org/journals/ETNA/vol.10.2000/pp1-20.dir/pp1-20.htm%
l}.

\bibitem{Brandt01}
Achi Brandt.
\newblock Multiscale scientific computation: review 2001.
\newblock In T.~F.~Chan T.~J.~Barth and R.~Haimes, editors, {\em Multiscale and
  Multiresolution Methods: Theory and Applications}, pages 1--96.
  Springer--Verlag, Heidelberg, 2001.

\bibitem{Brandt06}
Achi Brandt.
\newblock Methods of systematic upscaling.
\newblock Technical report, Department of Computer Science and Applied
  Mathematics, The Weizmann Institute of Science, March 2006.
\newblock
  \url{http://www.wisdom.weizmann.ac.il/~/eprints/archives/wisdomarchive/docum%
ents/disk0/00/00/03/98/01/06-05.pdf}.

\bibitem{Briggs01}
William~L. Briggs, Van~Emden Henson, and Steve~F. McCormick.
\newblock {\em A multigrid tutorial, second edition}.
\newblock SIAM, 2nd edition, 2001.

\bibitem{Carr81}
J.~Carr.
\newblock {\em Applications of centre manifold theory}, volume~35 of {\em
  Applied Math. Sci.}
\newblock Springer--Verlag, 1981.

\bibitem{Chen98}
Shiyi Chen and G.~D. Doolen.
\newblock Lattice {B}oltzmann method for fluid flows.
\newblock {\em Annu. Rev. Fluid Mechanics}, 30:329--364, 1998.
\newblock \doi{10.1146/annurev.fluid.30.1.329}.

\bibitem{Cisternas03}
J.~Cisternas, C.~W. Gear, S.~Levin, and I.~G. Kevrekidis.
\newblock Equation-free modeling of evolving diseases: Coarse-grained
  computations with individual-based models.
\newblock Technical report, [\url{http://arXiv.org/abs/nlin.AO/0310011}], 2003.

\bibitem{Cross93}
M.~C. Cross and P.~C. Hohenberg.
\newblock Pattern formation outside of equilibrium.
\newblock {\em Rev. Mod. Phys.}, 65(3):851--1112, 1993.
\newblock \doi{10.1103/RevModPhys.65.851}.

\bibitem{Dolbow04}
J.~Dolbow, M.~A. Khaleel, and J.~Mitchell.
\newblock Multiscale mathematics initiative: A~roadmap. {Report} from the 3rd
  {DoE} workshop on multiscale mathematics.
\newblock Technical report, Department of Energy, USA,
  \url{http://www.sc.doe.gov/ascr/mics/amr}, December 2004.

\bibitem{E04}
Weinan E, Bjorn Engquist, Xiantao Li, Weiqing Ren, and Eric Vanden-Eijnden.
\newblock The heterogeneous multiscale method: A review.
\newblock Technical report,
  \url{http://www.math.princeton.edu/multiscale/review.pdf}, 2004.

\bibitem{Foias88b}
C.~Foias, M.~S. Jolly, I.~G. Kevrekidis, G.~R. Sell, and E.~S. Titi.
\newblock On the computation of inertial manifolds.
\newblock {\em Phys. Lett.~A}, 131:433--436, 1988.
\newblock \doi{10.1016/0375-9601(88)90295-2}.

\bibitem{Foias91b}
C.~Foias and E.~S. Titi.
\newblock Determining nodes, finite difference schemes and inertial manifolds.
\newblock {\em Nonlinearity}, 4:135--153, 1991.
\newblock \doi{10.1088/0951-7715/4/1/009}.

\bibitem{Gander98}
Martin~J. Gander and Andrew~M. Stuart.
\newblock Space-time continuous analysis of waveform relaxation for the heat
  equation.
\newblock {\em SIAM Journal on Scientific Computing}, 19(6):2014--2031, 1998.

\bibitem{Gear03}
C.~W. Gear, Ju~Li, and I.~G. Kevrekidis.
\newblock The gap-tooth method in particle simulations.
\newblock {\em Phys. Lett.~A}, 316:190--195, 2003.
\newblock \doi{10.1016/j.physleta.2003.07.004}.

\bibitem{Giannoulis08}
Johannes Giannoulis, Michael Herrmann, and Alexander Mielke.
\newblock Lagrangian and hamiltonian two-scale reduction.
\newblock Technical report, \url{http://arxiv.org/abs/0802.2820v1}, 2008.

\bibitem{Jolly90}
M.~S. Jolly, I.~G. Kevrekidis, and E.~S. Titi.
\newblock Approximate inertial manifolds for the {{Kuramoto--Sivashinsky}}
  equation: analysis and computations.
\newblock {\em Physica~D}, 44:38--60, 1990.
\newblock \doi{10.1016/0167-2789(90)90046-R}.

\bibitem{Kevrekidis03b}
I.~G. Kevrekidis, C.~W. Gear, J.~M. Hyman, P.~G. Kevrekidis, O.~Runborg, and
  K.~Theodoropoulos.
\newblock Equation-free, coarse-grained multiscale computation: enabling
  microscopic simulators to perform system level tasks.
\newblock {\em Comm. Math. Sciences}, 1:715--762, 2003.

\bibitem{Kurganov00}
Alexander Kurganov and Eitan Tadmor.
\newblock New high-resolution central schemes for nonlinear conservation laws
  and convection-diffusion equations.
\newblock {\em J.~Computational Physics}, 160:241--282, 2000.
\newblock \doi{10.1006/jcph.2000.6459}.

\bibitem{Kuznetsov95}
Y.~A. Kuznetsov.
\newblock {\em Elements of applied bifurcation theory}, volume 112 of {\em
  Applied Mathematical Sciences}.
\newblock Springer--Verlag, 1995.

\bibitem{Li03}
J.~Li, C.~W.~Gear P.~G.~Kevrekidis, and I.~G. Kevrekidis.
\newblock Deciding the nature of the coarse equation through microscopic
  simulation: an augmented {Lagrangian} approach.
\newblock {\em SIAM Multiscale Modeling and Simulation}, 1:391--407, 2003.
\newblock \url{http://epubs.siam.org/sam-bin/dbq/article/41916}.

\bibitem{MacKenzie05a}
T.~MacKenzie and A.~J. Roberts.
\newblock Accurately model the {Kuramoto--Sivashinsky} dynamics with holistic
  discretisation.
\newblock {\em SIAM J.~Applied Dynamical Systems}, 5(3):365--402, 2006.
\newblock \doi{10.1137/050627733}
  \url{http://epubs.siam.org/SIADS/volume-05/art_62773.html}.

\bibitem{Marion89}
M.~Marion and R.~Temam.
\newblock Nonlinear {Galerkin} methods.
\newblock {\em SIAM J.~Numer. Anal.}, 26(5):1139--1157, 1989.
\newblock \url{http://locus.siam.org/SINUM/volume-26/art_0726063.html}.

\bibitem{Mercer90}
G.~N. Mercer and A.~J. Roberts.
\newblock A centre manifold description of contaminant dispersion in channels
  with varying flow properties.
\newblock {\em SIAM J.~Appl. Math.}, 50:1547--1565, 1990.
\newblock \url{http://link.aip.org/link/?SMM/50/1547/1}.

\bibitem{Mobilia06}
Mauro Mobilia, Ivan~T. Georgiev, and Uwe~C. Tauber.
\newblock Spatial stochastic predator-prey models.
\newblock Technical report, \url{http://arxiv.org/abs/q-bio.PE/0609039}, 2006.

\bibitem{Pavliotis06a}
G.~A. Pavliotis and A.~M. Stuart.
\newblock {\em An introduction to multiscale methods}.
\newblock University of Warwick and Imperial College London, 2006.
\newblock \url{ http://www.maths.warwick.ac.uk/~stuart}.

\bibitem{Roberts89}
A.~J. Roberts.
\newblock The utility of an invariant manifold description of the evolution of
  a dynamical system.
\newblock {\em SIAM J.~Math. Anal.}, 20:1447--1458, 1989.
\newblock \url{http://locus.siam.org/SIMA/volume-20/art_0520094.html}.

\bibitem{Roberts98a}
A.~J. Roberts.
\newblock Holistic discretisation ensures fidelity to {Burgers'} equation.
\newblock {\em Applied Numerical Modelling}, 37:371--396, 2001.
\newblock \doi{10.1016/S0168-9274(00)00053-2}.

\bibitem{Roberts01d}
A.~J. Roberts.
\newblock Holistically discretise the {Swift-Hohenberg} equation on a scale
  larger than its spatial pattern.
\newblock Technical report, \url{http://arXiv.org/abs/math.NA/0110153}, 2001.

\bibitem{Roberts99d}
A.~J. Roberts.
\newblock Simple and fast multigrid solution of {Poisson's} equation using
  diagonally oriented grids.
\newblock {\em ANZIAM~J.}, 43(E):E1--E36, July 2001.
\newblock \url{http://anziamj.austms.org.au/V43/E025}.

\bibitem{Roberts00a}
A.~J. Roberts.
\newblock A holistic finite difference approach models linear dynamics
  consistently.
\newblock {\em Mathematics of Computation}, 72:247--262, 2002.
\newblock \url{http://www.ams.org/mcom/2003-72-241/S0025-5718-02-01448-5}.

\bibitem{Roberts08j}
A.~J. Roberts.
\newblock Computer algebra derives discretisations via self-adjoint multiscale
  modelling.
\newblock Technical report, \url{http://eprints.usq.edu.au/4275/}, 2008.

\bibitem{Roberts08c}
A.~J. Roberts.
\newblock Model dynamics on a multigrid across multiple length and time scales.
\newblock Technical report, \url{http://arxiv.org/abs/0802.1098}, 2008.

\bibitem{Roberts08a}
A.~J. Roberts.
\newblock Resolve subgrid microscale interactions to discretise stochastic
  partial differential equations.
\newblock preprint, January 2008.

\bibitem{Roberts04d}
A.~J. Roberts and I.~G. Kevrekidis.
\newblock Higher order accuracy in the gap-tooth scheme for large-scale
  dynamics using microscopic simulators.
\newblock In Rob May and A.~J. Roberts, editors, {\em Proc. of 12th
  Computational Techniques and Applications Conference CTAC-2004}, volume~46 of
  {\em ANZIAM~J.}, pages C637--C657, July 2005.
\newblock \protect \url{http://anziamj.austms.org.au/V46/CTAC2004/Robe} [July
  20, 2005].

\bibitem{Roberts06d}
A.~J. Roberts and I.~G. Kevrekidis.
\newblock General tooth boundary conditions for equation free modelling.
\newblock {\em SIAM J.~Scientific Computing}, 29(4):1495--1510, 2007.
\newblock \doi{10.1137/060654554}.

\bibitem{Robinson96}
J.~C. Robinson.
\newblock The asymptotic completeness of inertial manifolds.
\newblock {\em Nonlinearity}, 9:1325--1340, 1996.
\newblock \url{http://www.iop.org/EJ/abstract/0951-7715/9/5/013}.

\bibitem{Samaey03b}
G.~Samaey, I.~G. Kevrekidis, and D.~Roose.
\newblock The gap-tooth scheme for homogenization problems.
\newblock {\em SIAM Multiscale Modeling and Simulation}, 4:278--306, 2005.
\newblock \doi{10.1137/030602046}.

\bibitem{Sijbrand85}
J.~Sijbrand.
\newblock Properties of centre manifolds.
\newblock {\em Trans. Amer. Math. Soc.}, 289:431--469, 1985.

\bibitem{Simpson07}
Matthew~J. Simpson, Alistair Merrifield, Kerry~A. Landman, and Barry~D. Hughes.
\newblock Simulating invasion with cellular automata: Connecting cell-scale and
  population-scale properties.
\newblock {\em Physical Review~E}, 76:021918, 2007.
\newblock \doi{10.1103/PhysRevE.76.021918}.

\bibitem{Temam90}
R.~Temam.
\newblock Inertial manifolds.
\newblock {\em Mathematical Intelligencer}, 12:68--74, 1990.

\bibitem{Thomas03}
James~L. Thomas, Boris Diskin, and Achi Brandt.
\newblock Textbook multigrid efficiency for fluid simulations.
\newblock {\em Annu. Rev. Fluid Mechanics}, 35:317--340, 2003.
\newblock \doi{10.1146/annurev.fluid.35.101101.161209}.

\bibitem{Witelski98}
T.~P. Witelski and A.~J. Bernoff.
\newblock Self-similar asymptotics for linear and nonlinear diffusion
  equations.
\newblock {\em Studies in Applied Maths.}, 100:153--193, 1998.
\newblock
  \url{http://www.ingentaconnect.com/content/bpl/sapm/1998/00000100/00000002/a%
rt00074}.

\bibitem{Wollkind94}
D.~J. Wollkind, V.~S. Manoranjan, and L.~Zhang.
\newblock Weakly nonlinear analyses of prototype reaction-diffusion model
  equations.
\newblock {\em SIAM Review}, 36(2):176--214, 1994.

\end{thebibliography}

\end{document}